\newtheorem{thm}{Theorem}
\newtheorem{lem}[thm]{Lemma}
\newtheorem{cor}[thm]{Corollary}
\newtheorem{fact}[thm]{Fact}
\theoremstyle{remark}
\newtheorem{example}{Example}
\theoremstyle{definition}
\newtheorem{algorithm}{Algorithm}
\newcommand{\R}{\mathbb{R}}
\newcommand{\eps}{\varepsilon}
\renewcommand{\epsilon}{\eps}
\renewcommand{\P}{\mathbb{P}}
\newcommand{\E}{\mathbb{E}}
\newcommand{\cE}{\mathcal{E}}
\newcommand{\Z}{\mathbb{Z}}
\newcommand{\mast}{{m_\ast}}
\newcommand{\vs}{v_{\mathrm{s}}}
\newcommand{\dist}{\mathrm{dist}}
\begin{document}
\title{Age of gossip from connective properties via first passage percolation}

\author{Thomas Jacob~Maranzatto}

\address{University of Maryland. Department
of Electrical and Computer Engineering}
\email{tmaran@umd.edu}

\author{Marcus Michelen}
\address{University of Illinois, Chicago. Department of Mathematics, Statistics and Computer science.}
\email{michelen@uic.edu}

\begin{abstract}
    In gossip networks, a source node forwards time-stamped updates to a network of observers according to a Poisson process.  The observers then update each other on this information according to Poisson processes as well. 
    The Age of Information (AoI) of a given node is the difference between the current time and the most recent time-stamp of source information that the node has received.

    We provide a method for evaluating the AoI of a node in terms of first passage percolation.  We then use this distributional identity to prove matching upper and lower bounds on the AoI in terms of connectivity properties of the underlying network.  In particular, if one sets $X_v$ to be the AoI of node $v$ on a finite graph $G$ with $n$ nodes, then we define $m_\ast = \min\{m : m \cdot |B_m(v)| \geq n\}$ where $B_m(v)$ is the ball of radius $m$ in $G$.  In the  case when the maximum degree of $G$ is bounded by $\Delta$ we prove $\E X_v = \Theta_\Delta(m_\ast)$.  
    As corollaries, we solve multiple open problems in the literature such as showing the age of information on a subset of $\Z^d$ is $\Theta(n^{1/(d+1)})$.  
    
    We also demonstrate examples of graphs with AoI scaling like $n^{\alpha}$ for each $\alpha \in (0,1/2)$.  These graphs are not vertex-transitive and in fact we show that if one considers the AoI on a graph coming from a vertex-transitive infinite graph then either $\E X_v = \Theta(n^{1/k})$ for some integer $k \geq 2$ or $\E X_v = n^{o(1)}$.
\end{abstract}

\maketitle

\section{Introduction}
Given a social network of gossiping people, how out of date is a given person's information?  While there are many such possible models for this problem, the \emph{Age of Information} (AoI) metric has garnered a great deal of interest in the information theory community since its introduction by Kaul, Yates, and Gruteser~\cite{aoiseminal}. 

The setting for defining the AoI metric in a gossip model is a weighted finite directed graph $G = (V,E,W)$.  Introduce an additional node $\vs$ referred to as the \emph{source} and add a directed edge from $\vs$ to each $v \in V$ and additional weights to these arcs.  Throughout all times $t \geq 0$ for each $v \in V$ we will maintain a non-negative real number $N_v(t)$ denoting the timestamp of the most recent packet $v$ has seen; we initialize\footnote{As we will primarily be interested in the $t \to \infty$ limit, the choice of initialization is not too important.} with $N_v(0) = 0$ for all $v \in V$.  For all $t$ we have $N_{\vs}(t) = t$.   On each directed edge $e$, we have mutually independent Poisson clocks with rate $w_e$ where $w_e$ is the weight of $E$ in $G$.  When the Poisson process on an edge of the form $(u,v)$  rings at time $t$, then one updates $N_v(t) = \max\{N_v(t),N_u(t)\}$.  That is, the \emph{fresh} packet is kept at vertex $v$.  The \emph{Age of Information} is then defined by $X_v(t) := t - N_v(t)$.

This process may be interpreted as the source node broadcasting the state of the world at time $t$ and all other nodes updating each other dynamically, with the quantity $X_v(t)$ representing how out of date the information is that $v$ has at time $t$.  A special case of weight choices is in the case of Yates' model \cite{yates_gossip_isit} where one assumes that every node has the same fixed communication capacity $\lambda$ and so for each edge $e = (u,v)$ we set the weight $w_{(u,v)} := \lambda / \mathrm{outdeg}(u)$.  In particular, for each non-source $v \in V$ we have $w_{(\vs,v)} = \lambda/n$.  In the case when the underlying graph begins as an undirected graph, we replace each edge with two directed edges, one in each direction.  We refer to this choice of weights as \emph{Yates weights}.  We note that in general $w_{uv} \not= w_{vu}$.

Much work~\cite{Buyukates_2022, gossip_dynamic,   kaswan2023age, Kaswan_2023, maranzatto2024age, gossip_grid,  yates_gossip_isit, yates_gossip_spawc}   has focused on analyzing the limiting average AoI defined by $\overline{X}_v := \lim_{t \to \infty} \E X_v(t)$ in various specific cases of the underlying graph $G$.  A main tool in the area is the Stochastic Hybrid System (SHS) approach.  This was originally used by Yates \cite[Theorem 1]{yates_gossip_isit}, who also proves the existence of the limit defining $\overline{X}_v$. The recursive SHS formula \cite[Theorem 1]{yates_gossip_isit} sets $X_S(t) := \min_{v \in S} X_v(t)$ and $\overline{X}_S := \lim_{t\to \infty} \E X_S(t)$; it writes $\overline{X}_S$ in terms of the quantities $\overline{X}_{S \cup u}$ for all $u$ that neighbor the set $S$.  As such, obtaining bounds on $\overline{X}_v$ through this approach requires unraveling a recursion that controls $\overline{X}_S$ for all connected subsets $S$ containing $v$.

Our contribution is twofold.  First, we obtain a distributional identify for the random variables $X_v(t)$ in the case of arbitrary weights in terms of \emph{first passage percolation} on the same graph but with all edges reversed.  As an immediate corollary, we obtain not only the convergence of the limit of expectations $\lim_{t \to \infty}\E X_v(t)$ but also convergence in distribution of the random variable $X_v(t)$ as $t \to \infty$.  Further, this exact formula provides a description for each finite time $t$ in addition to the $t \to \infty$ case.  This is stated precisely in \cref{thm:age_distribution}.  In the language of Markov processes, we show that first passage percolation is the \emph{dual process} of the Markov process defining the AoI variables $X_v(t)$.  Roughly, this means that if one reverses time, then the process remains Markovian and is exactly a version of first passage percolation.

Second, using \cref{thm:age_distribution} we obtain upper and lower bounds on $\E X_v(t)$ in terms of geometric qualities of the underlying graph $G$ when we use Yates weights.  To set up this theorem, in an undirected graph $G$ let $\dist_G(u,v)$ denote the graph distance, i.e.\ the least number of edges in a path from $u$ to $v$.  Define $B_m(v) := \{u : \dist_G(u,v) \leq m\}$ to be the ball of radius $m$ centered at $v$.  We now define the quantities $$m_\ast(v) := \min\{m : |B_m(v)| \cdot m \geq |V(G)|\}\qquad \text{ and } \qquad \Phi_m(v) := \frac{|B_m(v)|}{|B_{m-1}(v)|}\,.$$
Finally, we write $\delta$ to be the minimum degree of $G$ and $\Delta$ to be the maximum degree of $G$.

\begin{thm} \label{thm:expected_age}
    For every $v \in V(G)$ with Yates weights we have $$\frac{1}{20}\min\left\{\frac{\delta}{\Delta}, \frac{1}{\Phi_{\mast(v)}(v)}, \frac{t}{\mast(v)}\right\} \leq \frac{\E {X}_v(t)}{\lambda \mast(v)} \leq 3 \Delta\,. $$
\end{thm}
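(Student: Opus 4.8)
The plan is to reduce everything to first passage percolation via \cref{thm:age_distribution}, which identifies $X_v(t)$ in law with $\min\{t,T_v\}$, where $T_v$ is the first-passage time from $v$ to the source $\vs$ in first passage percolation run on the arc-reversal of $G$, each reversed arc carrying an independent exponential passage time of rate equal to the Yates weight of the original arc. Since then $\E X_v(t)=\E\min\{t,T_v\}$, it suffices to control $T_v$. The key structural observation is that the only arcs entering $\vs$ in the reversed graph are the reversed source arcs $(j,\vs)$, each of rate $\lambda/n$; hence every $v$-to-$\vs$ path runs inside $G$ from $v$ to some vertex $j$ and then jumps to $\vs$, so
\begin{equation}\label{eq:Tdecomp}
    T_v \;=\; \min_{j\in V}\bigl(D_j+\xi_j\bigr),
\end{equation}
where $D_j$ is the reversed-FPP distance from $v$ to $j$ inside $G$ and $(\xi_j)_{j\in V}$ are i.i.d.\ exponentials of rate $\lambda/n$, independent of $(D_j)_j$. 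Write $m:=\mast(v)$.

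For the upper bound it is enough to bound $\E T_v$, since truncation at $t$ only helps. Here I would let the FPP explore $B_m(v)$ and then exploit the $|B_m(v)|$ source clocks sitting on those vertices: each $j\in B_m(v)$ lies on a path of at most $m$ arcs from $v$, every arc of which has passage time stochastically dominated by $\mathrm{Exp}(\lambda/\Delta)$, so a union bound over the at most $\Delta^m$ such paths against a Gamma upper-tail estimate bounds $\max_{j\in B_m(v)}D_j$ by $O_\Delta(\lambda m)$ with overwhelming probability, hence also in expectation; and once all of $B_m(v)$ is reached, the remaining time until one of the $|B_m(v)|$ independent source clocks rings is $\mathrm{Exp}(|B_m(v)|\lambda/n)$, whose mean is $O(\lambda m)$ by the defining inequality $|B_m(v)|\cdot m\ge n$ of $\mast(v)$. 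Combining the two phases and optimizing the constants gives $\E T_v\le 3\Delta\lambda m$.

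For the lower bound I would produce an $s_\star$ of order $\lambda m\cdot\min\{\delta/\Delta,\,1/\Phi_m(v)\}$ with $\Prob(T_v>s_\star)\ge\tfrac12$; then $\min\{t,T_v\}\ge\min\{t,s_\star\}\mathbf 1\{T_v>s_\star\}$ yields $\E X_v(t)\ge\tfrac12\min\{t,s_\star\}$, and factoring $\lambda m$ out of the minimum and adjusting constants recovers the three terms $\delta/\Delta$, $1/\Phi_m(v)$ and $t/m$ of the statement. By \eqref{eq:Tdecomp}, $\{T_v\le s_\star\}$ forces some source clock to ring by time $s_\star$ at a vertex the FPP has already reached, and I would kill this via two complementary estimates, each made to have probability $\le\tfrac14$. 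First, while the FPP is still confined to $B_{m-1}(v)$ the total source-clock rate is at most $|B_{m-1}(v)|\lambda/n$, and minimality of $\mast(v)$ gives $|B_{m-1}(v)|(m-1)<n$, so a clock rings by time $s_\star$ with probability at most $\lambda s_\star/(m-1)$, which is $\le\tfrac14$ once $s_\star$ is a small multiple of $\lambda m$; running the same step with the larger ball $B_m(v)$ and the identity $|B_m(v)|=\Phi_m(v)|B_{m-1}(v)|$ is what produces the alternative $1/\Phi_m(v)$ bound. Second, the FPP might have left $B_{m-1}(v)$, i.e.\ traversed a path of length $\ge m$ by time $s_\star$; each such path costs at least a $\mathrm{Gamma}(m,\lambda/\delta)$ variable, and a union bound over the $\le\Delta^m$ relevant paths against the Gamma \emph{lower} tail makes this $\le\tfrac14$ once $s_\star$ is a small multiple of $\lambda m\,\delta/\Delta$.

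The hardest part will be this last estimate: ruling out that the FPP ``teleports'' out of $B_{m-1}(v)$ along an atypically cheap path. The $\Delta^m$-versus-$\mathrm{Gamma}$ union bound is genuinely lossy and is exactly why the degree ratio $\delta/\Delta$ enters the final bound — on a $\Delta$-regular tree the minimum cost to reach graph-distance $m$ really is of order $\lambda m\,\delta/\Delta$, so this factor cannot be removed in general. A secondary source of friction is obtaining the clean constant $3\Delta$ in the upper bound, which requires the exploration estimate to be executed with some care rather than bluntly.
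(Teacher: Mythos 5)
Your overall strategy is the paper's: reduce to first passage percolation on the reversed graph via \cref{thm:age_distribution}, view every $v\to\vs$ path as an interior path followed by a single rate-$\lambda/n$ source edge, and prove the lower bound by showing that (i) no source edge inside a ball around $v$ is cheap and (ii) no interior path of length at least $\mast$ is cheap, the latter by a union bound over at most $\Delta^{\mast}$ paths against the Gamma small-ball estimate (the paper's \cref{lem:reverse_tail}). Your lower bound is sound, and your case split ($j\in B_{\mast-1}$ versus $j\notin B_{\mast-1}$) is in fact slightly cleaner than the paper's ($j\in B_{\mast}$ versus not): requiring only the source edges in $B_{\mast-1}$ to be expensive lets you use $(\mast-1)|B_{\mast-1}|<n$ directly and dispenses with $\Phi_{\mast}$ altogether, which only strengthens the stated bound since it is a minimum of three terms. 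The one caveat is bookkeeping: with failure probabilities $1/4$ and unspecified ``small multiples'' you land at a constant visibly worse than $1/20$, so recovering the stated constant requires choosing thresholds as tightly as the paper does (it takes $K=\tfrac{1}{10}\min\{\cdot\}$ and wins by a hair).

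The genuine gap is in the upper bound. You bound $T_v\le\max_{j\in B_{\mast}}D_j+\min_{j\in B_{\mast}}\xi_j$ and control the maximum by a union bound over $\le\Delta^{\mast}$ paths against the Gamma upper tail. That union bound forces the deviation parameter $\theta$ to satisfy $\theta-1-\log\theta\ge\log\Delta$, i.e.\ $\theta\asymp\log\Delta$, so it yields $\E\max_{j\in B_{\mast}}D_j=O(\Delta\log\Delta\cdot\mast)$, not $O(\Delta\mast)$; and this is not mere slack in the estimate: on the $\Delta$-regular tree the maximum of $D_j$ over $B_{\mast}$ genuinely is of order $\Delta\mast\log\Delta$ (a branching-random-walk maximum), so no optimization of the ``max over the ball, then cheapest source clock'' scheme can produce $3\Delta$, or indeed $C\Delta$ for an absolute $C$. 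Your route proves $\E X_v(t)=O_{\Delta}(\mast)$, which captures the $\Theta_\Delta(\mast)$ message, but not the inequality as stated. The repair, and what the paper does, is never to take a maximum over the ball: let $u\in B_{\mast}$ be a vertex whose source edge is cheap (the event involves only the source-edge weights, hence is independent of the interior weights), and bound the passage time of a \emph{single} fixed path of at most $\mast$ edges to $u$, a sum of $\mast$ exponentials of mean at most $\Delta$, whose upper tail needs no union bound; integrating the two tails gives $\Delta\mast+\mast+O(\Delta\mast)\le 3\Delta\mast$. Equivalently, the argmin of $\xi$ over $B_{\mast}$ is uniform on $B_{\mast}$ and independent of the $D_j$, so the relevant quantity is an average of $\E D_j\le\Delta\mast$ rather than a maximum.
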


We note that if $t$ is sufficiently large then this shows that $\E X_v(t) = \Theta_\Delta(\lambda \mast(v))$.
\cref{thm:expected_age} immediately implies many bounds on the limiting AoI in many instances.  In the case of a cycle on $n$ nodes, an upper bound of $O(\sqrt{n})$ was proven by Buyukates, Bastopcu, and Ulukus~\cite{Buyukates_2022}.  Srivastava and Ulukus upper bounded the AoI on $\Z_m^2$ with $n:=m^2$ nodes by $O(n^{1/3})$~\cite{gossip_grid}, and conjectured that on $\Z_m^d$ one has the bound of $O(n^{1/(d+1)})$.  Results for the parameterized class of `generalized cycles' were given by the same authors~\cite{gossip_rings}.  The first author analyzed the AoI on random regular graphs \cite{maranzatto2024age}.
\cref{thm:expected_age} reproves these results, proves the conjecture for the $d$-dimensional lattice and proves corresponding lower bounds that are sharp up to constants.  

The first open question in the survey \cite{kaswan2023age} states that previous results ``[hint] at a synergy between
connectivity and freshness: the more connected a network is,
the fresher its nodes will be through gossiping. Describing this synergy in its entirety by finding age scalings of networks with medium-level connectivities is an important open problem.''  From this perspective, \cref{thm:expected_age} provides a direct relationship between connectivity and AoI.  In \cref{sec:examples} we outline many examples of applications of \cref{thm:expected_age}.  

Seeking ``medium-level'' connectivities as discussed in \cite{kaswan2023age}, we show that for each $\alpha \in (0,1/2)$ one can construct a connected graph $G$ of bounded degree so that $\overline{X}_v = \Theta(n^\alpha)$ (this is \cref{ex:intermediate}).  The graph chosen depends on the parameter $n$, but we in fact show that this is in some sense a requirement.  If one takes $G$ by taking portions of an infinite vertex-transitive graph then we show that one has either $\overline{X}_v = \Theta(n^{1/k})$ for an integer $k \geq 1$ or $\overline{X}_v = n^{o(1)}$ (handled in \cref{ex:poly-growth} and \cref{ex:big-growth} respectively).  In particular, this shows that for $\alpha$ satisfying $1/\alpha \notin \Z$ it is impossible to obtain a graph on which one has $\overline{X}_v = \Theta(n^\alpha)$ by taking a ball from a vertex-transitive graph.  We prove this by combining \cref{thm:expected_age} with statements from geometric group theory by Gromov and Trofimov in \cref{ex:poly-growth}.

In \cref{subsection:lattice}, we analyze the Hamming cube to show that the lower bound in \cref{thm:expected_age} is sharp up to an absolute constant (i.e.\ independent of $d$).  Similarly, we analyze the regular tree in \cref{subsection:tree} to show that the upper bound in \cref{thm:expected_age} is sharp up to an absolute constant.

Finally, we handle the case of random geometric graphs in \cref{ex:RGG}, random regular graphs in \cref{ex:RRG} and a cycle plus a random matching in \cref{ex:CRM}.

\subsection{Related work on gossip}

Gossiping protocols have existed in network and database applications since at least the 1980's, and have proven successful in many different problems~\cite{gossip_membership, gossip_promise, database_gossip, gossip_dynamic}.  In the basic model, multiple servers in a distributed network aim to keep updated information by passing messages, and importantly these messages are sent independently by the servers with no coordination.  The principle behind this protocol is that when fresh data becomes available to a server by some external process, the dissemination of this data will not take `too long' to reach the other servers when compared to a tailor-made protocol.  Furthermore the gossiping algorithm is simple to implement, so real-world applications might benefit from the ease of deployment over small gains in data freshness.

Recently there has been considerable theoretical interest in quantifying the freshness of information in gossip networks through the lens of Stochastic Hybrid Systems (SHS).  This direction was initiated by Yates~\cite{yates_gossip_isit} to study the first-order behavior of the age of information in fully-connected gossip networks.  The key result from this work is using the SHS framework to translate the stochastic quantity of interest into a recurrence which can be solved or bounded with algebraic manipulations.  Our work provides an alternative methodology to study gossip models, and contrasts by providing a complete characterization of the age distribution.

We note that some related work has focused on the \textit{version Age of Information} (vAoI) where packets are marked by positive integer version numbers, not real-valued timestamps.  The source generates versions by a separate Poisson process with rate $\lambda_e$, and the vAoI is the difference between the source and node version.  Yates~\cite{yates_gossip_isit, yates_gossip_spawc} showed the limiting average vAoI and AoI are the same up to a scalar multiple.  We note that \cref{thm:age_distribution} also applies for vAoI by simply generating a Poisson random variable of mean $\lambda_e X_v(t)$ to obtain the distribution of the vAoI at finite time $t$.  As such, we focus on AoI since the case of vAoI is analogous.

In~\cite{yates_gossip_isit}, it was shown that the vAoI of the complete graph $K_n$ scales as $\Theta(\log n)$, and on the disconnected graph $\overline{K_n}$ the vAoI scales as $\Theta(n)$.  A conjecture was also given that the cycle $C_n$ should have version age $O(\sqrt{n})$ which was subsequently proven by Buyukates, Bastopcu, and Ulukus~\cite{Buyukates_2022}.  Srivastava and Ulukus show the vAoI of the 2D lattice grows as $O(n^{1/3})$~\cite{gossip_grid}, and scaling results for the parameterized class of `generalized cycles' were given by the same authors~\cite{gossip_rings}.  Random graphs and Bipartite graphs have been considered by the first author~\cite{maranzatto2024age}.

We refer the reader to a general survey on AoI~\cite{Yates_survey} and a survey specialized to gossiping networks~\cite{kaswan2023age} for more background.

\subsection{First passage percolation}
We now define first passage percolation and formally state our connection between AoI and first passage percolation (\cref{thm:age_distribution}).  Let $G = (V, E, W)$ be a finite weighted directed graph.  Typically, the weights in $W = \{\tau_e\}_{e \in E}$ are chosen according to some probability distribution, often independently.  A \textit{path} $\gamma$ is a sequence of edges $e_1,e_2,...,e_m$ in $E$ such that for every $i \in [m-1]$, the starting node of $e_i$ is the final node of $e_{i+1}$. The \textit{passage time} of $\gamma$ is
\[T(\gamma) := \sum_{e \in \gamma} \tau_e. \]
Then for any two nodes $u,v \in V$, the \textit{passage time} from $u$ to $v$ is
\[T(u,v) := \min_\gamma  T(\gamma),\]
where the minimum is over paths with $u$ as the starting node and $v$ as the final node.  The passage time between $u$ and $v$ can be interpreted as a distortion or perturbation of the usual graph distance; in particular, if one sets $\tau_e = 1$ for all $e \in E$ then one recovers the usual graph distance.  
Most classic examples in the first passage percolation literature consider the case when $\{\tau_e\}_{e \in E}$ are independent and identically distributed.  

We now set up our connection between the AoI $X_v(t)$ and first passage percolation.  Let $G = (V, E, W)$ be a gossip network with source $\vs$.  We define the \textbf{auxiliary graph} $G' = (V, E', W')$ on the same set of nodes. \begin{itemize}
    \item The edges are $E' = \{(u,v) : (v,u) \in E\}$, i.e.\ all edges are reversed in $G'$.
    \item The weights $W' = (\tau_e)_{e \in E'}$ are independent exponential random variables where the rate of $\tau_{(j,i)}$ is $w_{(i,j)}$.
\end{itemize} 

The graph $G'$ is simply $G$ with all edges inverted, and the Poisson processes on edges are replaced with exponential random variables with an appropriate parameter.  We show the age of a node is essentially the first passage time from $v$ to $\vs$ in $G'$.

\begin{thm}\label{thm:age_distribution}
    For each time $t$ and node $v$, the age process of $v$ at time $t$ has distribution given by \[X_v(t) = \min\{ 
 T(v,\vs),t\}\,.\]
\end{thm}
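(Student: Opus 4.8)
The plan is to exhibit an explicit coupling between the AoI dynamics on $G$ and first passage percolation on the auxiliary graph $G'$, driven by the same underlying collection of Poisson clocks. Fix the time horizon $t$. The key observation is that the value $N_v(t)$ is determined by looking \emph{backwards} in time: $N_v(t) = s$ precisely when there is a time-respecting path of clock rings, starting from the source at time $s$, that reaches $v$ by time $t$, and $s$ is the largest such value. Equivalently, $t - N_v(t)$ is the smallest $r \geq 0$ such that, tracing clock rings backward from $(v,t)$ over a window of length $r$, one reaches the source; if no such path of length $\le t$ exists, then $N_v(t) = 0$ and $X_v(t) = t$. I would first make this ``backwards path'' description precise: run the Poisson process on each directed edge $e=(a,b)$ of $G$, and say that at a ring time $u$ of this clock the pair $(b,u)$ is ``informed from'' $(a,u)$. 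Then $N_v(t) = \max\{ s : (v,t) \text{ is reachable from } (\vs,s) \text{ by a forward time-monotone chain of such informings}\}$, which follows by a straightforward induction on ring times from the update rule $N_b(u) = \max\{N_b(u^-), N_a(u^-)\}$.

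The second step is to reverse time. Define, for each directed edge $e = (a,b)$ of $G$, the point process obtained by reflecting its rings through time $t$: a ring of the $(a,b)$-clock at time $u \le t$ becomes a mark at time $t-u$ on the reversed edge $(b,a) \in E'$. By the standard fact that a Poisson process on $[0,t]$ is invariant in distribution under the reflection $u \mapsto t-u$, these reflected processes are again independent Poisson processes, now on the edges of $G'$, with the $(b,a)$-edge carrying rate $w_{(a,b)}$ — exactly the rates prescribed for $G'$. Under this reflection, a forward time-monotone informing chain from $(\vs,s)$ to $(v,t)$ in the original picture becomes a forward time-monotone chain of marks in $G'$ starting at $v$ at reversed-time $0$ and arriving at $\vs$ at reversed-time $t-s$. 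Hence $t - N_v(t) = \min\{\,t-s : \text{such a chain reaches } \vs\,\} = $ the first time, in the reversed process, that a ``front'' spreading from $v$ along the marks of $G'$ reaches $\vs$ — and this infection-type spreading time is precisely the first passage time $T(v,\vs)$ when edge $(b,a)$ is assigned the waiting time until its first reversed mark.

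The third step is to identify that spreading time with the first passage percolation weights as defined in the statement. The front started at $v$ traverses edge $e' \in E'$ after waiting an $\mathrm{Exp}(\text{rate})$ time from the moment its tail endpoint is reached; by the memorylessness of the Poisson process, the increments used along any fixed path are independent exponentials with the stated rates, and the time to reach $\vs$ is the minimum over paths of the sum of these, i.e. $T(v,\vs)$ with weights $\tau_{(b,a)} \sim \mathrm{Exp}(w_{(a,b)})$ independent across edges. Capping at $t$ accounts for the case $N_v(t)=0$. Combining the three steps gives $X_v(t) = \min\{T(v,\vs), t\}$ in distribution, as claimed.

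The main obstacle I anticipate is the careful bookkeeping in the first step: formalizing ``reachable by a forward time-monotone chain of informings'' and proving rigorously (by induction over the ordered ring times, which is a.s.\ a discrete well-ordered set on $[0,t]$) that this combinatorial quantity equals $N_v(t)$ for \emph{every} realization, not merely in distribution. One must also be slightly careful that the exponential increments along a path are genuinely independent — this is clean for a single fixed path by the strong Markov property, and since $T(v,\vs)$ is a minimum over finitely many paths, it is enough to couple to the single underlying Poisson field on $G'$ rather than to reason path-by-path. Once the exact (pathwise) identity $N_v(t) = \max\{s : \dots\}$ is in hand, the time-reversal and the passage-to-FPP are essentially formal, using only the reflection invariance of the Poisson process and the definition of $T$.
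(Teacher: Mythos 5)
Your proposal is correct and follows essentially the same route as the paper: rewrite $X_v(t)$ as an optimization over time-respecting update chains from $\vs$ to $v$, reverse time using reflection invariance of the Poisson clocks, and identify the resulting spreading time from $v$ in $G'$ with $T(v,\vs)$ via memorylessness. The only difference is presentational: the paper packages your third step as a separate lemma (\cref{lem:first-passage-dynamic}), proved by coupling the exponential weights to the Poisson field through a Dijkstra-style exploration, which is exactly the careful version of the independence bookkeeping you flag at the end.
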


The main idea for the proof of \cref{thm:age_distribution} is to work backwards from time $t$.  For instance, if one  asks for how much time has passed since a given neighbor of $v$ has updated to $v$, then this is an exponential variable.  We then may ask which neighbor of $v$ most recently sent an update to $v$, and continue onward backwards in time.  Due to the fact that a time-reversed Poisson process is a Poisson process and the memorylessness properties of the exponential random variables, we may reveal the delays along edges at any given point in time and still obtain exponential random variables of the appropriate parameter.  Exploring the shortest sequence of delays backwards from $v$ to $\vs$ in this iterative fashion is exactly the same as running Dijkstra's algorithm with edge weights given by independent exponential random variables, i.e.\ the same as computing $T(v,\vs)$ in first passage percolation.  We prove this connection in \cref{sec:fpp}.  We also note that if one considers multiple source nodes $\{\vs^{(j)}\}_{j}$ then our proof similarly proves $X_v(t) = \min\{ \min_j T_{G'}(v,\vs^{(j)}), t\}$, i.e.\ one needs to look at the shortest passage time to \emph{any} source rather than just the single source.  For simplicity we restrict to a single source throughout.

First passage percolation was first introduced in 1965 by Hammersley and Welsh \cite{hammersley1965first} as a model for how a liquid may travel through a disorded porous medium.  It is also a basic model of the spread of disease and is occasionally referred to as the SI model, since each node is either ``Susceptible'' or ``Infected.''  One can interpret first passage percolation as understanding the spread of a single packet of information through a network.  There has been much work on first passage percolation in probability theory, with the most classic settings being the case of the undirected lattice $\Z^d$ with $d \geq 2$ and the weights $(\tau_e)_{e \in E}$ being independent and identically distributed.  In the special case of i.i.d.\ exponential random variables and $d = 2$, the model is equivalent to the Eden model \cite{eden1958probabilistic,eden1961two}, an even older growth model.   Growth models of this form are also sometimes called a Richardson model, named as such due to Richardson's proof of a shape theorem for the Eden model \cite{richardson1973random}.  For more context on first passage percolation we refer the reader to the survey \cite{auffinger2017fifty}. 

We note that the present work is far from the first to note a connection between gossip models and percolation.  For instance, works of Aldous \cite{aldous2013knowing} and Selen, Nazarathy, Andrew, and Vu \cite{selen2013age} explicitly make this connection in the motivations of their work.  A key novelty in our case is that this connection can be made exact in the model at hand, even though more than one packet of information is sent by the source.  The analysis of this model by the reversal of time appears to be a new ingredient for this model, hence the novelty of \cref{thm:age_distribution} and \cref{thm:expected_age} within the literature on the AoI since its introduction in \cite{aoiseminal}.

\section{Proof of \cref{thm:age_distribution}}\label{sec:fpp}

In order to prove \cref{thm:age_distribution} we will explore the most recent update path to $v$ from $\vs$ node by node and see that this is precisely Dijsktra's algorithm for evaluating the distance $T(v,\vs)$ in first passage percolation.  As such, we review Dijkstra's algorithm now, which for a given node $v_0$ will output the distances $T(v_0,u)$ for all $u \in V$.

\begin{algorithm}[Dijkstra's algorithm] \label{algorithm}  Given a directed weighted graph $G$ and node $v_0 \in V(G)$ perform the following.
\begin{enumerate}
    \item Initially, all vertices are marked as \emph{unvisited} and we let $U$ be the set of unvisited nodes.
    \item For each $u \neq v_0$ we initialize the distance to $v_0$ to be $+\infty$ and set $T(v_0,v_0) = 0$.
    \item \label{step:unvisited} From the set of unvisited nodes, set $u$ to be the node with least distance to $v_0$ and set it to be the \emph{current node}.  If no unvisited nodes remain, terminate the algorithm and output $\{T(v_0,v)\}_{v \in V(G)}$.
    \item For the current node $u$, enumerate its neighbors $\{x_1,\ldots,x_\ell\}$ and update $$T(v_0,x_j) = \min\{T(v_0,x_j), T(v_0,u) + \tau_{u,x_j}\}$$ for each $j \in [\ell]$. \label{step:weight}
    
    \item Mark the current node $u$ as visited and remove it from the unvisited set.  Return to step \eqref{step:unvisited}.
\end{enumerate}
\end{algorithm}

Crucially, we note that in first passage percolation with exponential edge weights, we may choose to only reveal the random variable $\tau_e$ when it is explored in step \eqref{step:weight}.  In particular, first passage percolation with exponential edge weights may equivalently be interpreted as the following model instead.   Given a weighted directed graph $G = (V,E,W)$ on each edge $e$ place a Poisson process of intensity $w_e$.  For two vertices $v_0,u \in V$ define the random variable $\widetilde{T}(v_0,u)$ via 
\begin{equation}\label{eq:wT-def}
    \widetilde{T}(v_0,u) := \min\left\{ \sum_{j \geq 0} s_j : \begin{array}{c}\exists~v_0,v_1,\ldots,v_k \text{ with } v_k = u \text{ and } s_j \geq 0  \text{ so that } \\
   \sum_{i \geq j} s_i \text{ is an update time of } (v_j,v_{j+1}) \text{ for all }j \in \{0,\ldots,k-1\}\end{array}\right\}\,.
\end{equation}

We now show that this quantity is exactly distributed as the passage time $T(v_0,u)$ with exponential variables placed on each edge.

\begin{lem}\label{lem:first-passage-dynamic}
    For a directed weighted graph $G(V,E,W)$ the distribution of the quantity $\widetilde{T}(v_0,u)$ in \eqref{eq:wT-def} is equal to the distribution of $T(v_0,u)$ in first passage percolation where one places independent exponential random variables of intensity $w_e$ on each edge $\tau_e$. 
\end{lem}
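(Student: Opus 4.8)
The plan is to recast $\widetilde T(v_0,u)$ as the value produced by a Bellman/Dijkstra recursion on $G$ with the source and sink roles interchanged, and then to observe that running this recursion inspects the Poisson process on each edge exactly once, each inspection contributing a fresh exponential variable by memorylessness.

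First I would rewrite \eqref{eq:wT-def} in terms of the partial sums $p_j:=\sum_{i\geq j}s_i$. Since $p_j-p_{j+1}=s_j\geq 0$ and it is clearly optimal to take $s_i=0$ for $i\geq k$, the admissible data are exactly a path $v_0,v_1,\ldots,v_k$ with $v_k=u$ together with a non-increasing sequence $p_0\geq p_1\geq\cdots\geq p_{k-1}\geq 0$ such that $p_j$ is an update time of $(v_j,v_{j+1})$, and $\widetilde T(v_0,u)=\min p_0$ over all such. For a \emph{fixed} path the map $\theta\mapsto(\text{first update time of a given edge that is }\geq\theta)$ is non-decreasing, so the optimal sequence is obtained greedily from the endpoint: $p_{k-1}$ is the first update time of $(v_{k-1},u)$, and $p_j$ is the first update time of $(v_j,v_{j+1})$ that is $\geq p_{j+1}$. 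Minimizing over paths, this identifies $\widetilde T(v_0,u)$ with $d(v_0)$, where $d$ solves the single-sink shortest-path system
\begin{equation}\label{eq:plan-bellman}
 d(u)=0,\qquad d(x)=\min_{y:\,(x,y)\in E}\Big(\text{first update time of }(x,y)\text{ that is }\geq d(y)\Big)\quad(x\neq u).
\end{equation}
This is solved by Dijkstra's algorithm (\cref{algorithm}) applied to $u$ on the graph obtained from $G$ by reversing every edge: repeatedly finalize the unfinalized vertex $y$ of smallest label, and then for each in-neighbor $x$ of $y$ in $G$ relax $d(x)$ using the edge $(x,y)$ as in \eqref{eq:plan-bellman}. (Correctness of this procedure, and the fact that the minimum is attained, is standard given that update times are a.s.\ positive.)

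Next I would run this procedure while revealing the randomness lazily. The crucial point is that the Poisson process on an edge $(x,y)\in E$ is inspected exactly once, namely when $y$ is finalized and $(x,y)$ is used to relax $d(x)$. At that instant the value $d(y)$ has already been fixed, and it was computed only from Poisson processes inspected earlier in the algorithm, none of which is $(x,y)$; in particular $d(y)$ is independent of the process on $(x,y)$, which, conditionally on everything revealed so far, is still a fresh rate-$w_{(x,y)}$ Poisson process. By memorylessness the increment
\[
 \tau_{(x,y)}:=\Big(\text{first update time of }(x,y)\text{ that is }\geq d(y)\Big)-d(y)
\]
is therefore distributed as $\mathrm{Exp}(w_{(x,y)})$ and independent of the past. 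Iterating over the finalization order shows that $\{\tau_e\}_{e\in E}$ is a family of independent variables with $\tau_e\sim\mathrm{Exp}(w_e)$. Substituting these increments into \eqref{eq:plan-bellman} turns it into $d(u)=0$ and $d(x)=\min_{y:\,(x,y)\in E}\big(\tau_{(x,y)}+d(y)\big)$, which is precisely the Bellman characterization of the first-passage distance from $x$ to $u$ in $G$ with edge weights $\{\tau_e\}$; hence $\widetilde T(v_0,u)=d(v_0)=T(v_0,u)$ with the $\{\tau_e\}$ the i.i.d.\ $\mathrm{Exp}(w_e)$ weights of the lemma.

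The step that requires care — and the main obstacle — is the lazy-revelation argument: one must check that whenever the recursion queries the ``first update time of an edge after a threshold $\theta$'', that edge has not been inspected before and $\theta$ is a function of the Poisson processes on the other (independent) edges only. The Dijkstra ordering guarantees both facts, since an edge into $y$ is touched only when $y$ is dequeued, at which point $d(y)$ is already determined by edges lying along the (simple) predecessor path from $y$ to $u$. Once this is in place, memorylessness of the Poisson process and correctness of the label-setting procedure make the remainder routine; this lazy-revelation device is the standard construction of first passage percolation with exponential weights, equivalently the Eden/Richardson growth model.
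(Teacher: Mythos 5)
Your proof is correct and follows essentially the same route as the paper: both identify $\widetilde T(v_0,u)$ with a lazily-revealed Dijkstra computation in which each edge's Poisson process is inspected exactly once, at a threshold determined only by previously revealed processes, so that memorylessness (the strong Markov property) produces independent $\mathrm{Exp}(w_e)$ increments and the resulting recursion is exactly the first passage time. The only difference is cosmetic: you run the label-setting from the sink $u$ via the Bellman system (which matches the decreasing thresholds in \eqref{eq:wT-def} directly), whereas the paper phrases the same coupling as Dijkstra's algorithm run from $v_0$.
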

\begin{proof}
    First note that we may in fact write $$\widetilde{T}(v_0,u) = \min\left\{ \sum_{j \geq 0} s_j : \begin{array}{c}\exists~v_0,v_1,\ldots,v_k \text{ with } v_k = u\, \\
   s_j = \min\{ s : s \geq 0, \sum_{i \geq j+1} s_i - s \text{ is an update time of }(v_j,v_{j+1})\}\end{array}\right\}$$
   as taking $s_j$ to be the minimum such delay cannot increase the sum $\sum_j s_j$.  In order to identify $\widetilde{T}(v_0,u)$ and $T(v_0,u)$, we will couple the exponential random variables $\tau_{e}$ in $T(v_0,u)$ with the Poisson clocks used for $\widetilde{T}$ to see that the two are equal.  In particular, we will find $T$ by running Dijkstra's algorithm (\cref{algorithm}) and revealing the randomness in $\tau_e$ only in step \eqref{step:weight}.  

   Suppose we have completed step \eqref{step:unvisited} of \cref{algorithm}.  When we move to step \eqref{step:weight}, let $u$ be the current node and $x_1,\ldots,x_\ell$ be its neighbors.  If we define $$\tau_{u,x_j} = \min\{ s : s \geq 0, T(v_0,u) + s \text{ is an update time of }(u,x_j)\}  $$
   then we see that the variables $\tau_{u,x_j}$ are mutually independent of each other and all previously revealed times $(\tau_e)$ by the strong Markov property of the Poisson process.  Further, each $\tau_{u,x_j}$ is an exponential random variable of rate $w_{u,x_j}$.  We may then show by induction on the number of iterations taken in Dijkstra's algorithm that this maintains $\widetilde{T}(v_0,u) = T(v_0,u)$ for this coupling, completing the proof. 
\end{proof}

With this reinterpretation in mind, \cref{thm:age_distribution} follows quickly.

\begin{proof}[Proof of \cref{thm:age_distribution}]
    Note that we may write \begin{equation*}
    X_v(t) = t - \max\left\{ t_0 : \begin{array}{c}\exists~v_0,v_1,\ldots,v_k \text{ with }v_0 = \vs, v_k = v \text{ and } \\
    0 \leq t_0 \leq t_1 \leq \ldots \leq t_{k-1} \leq t \text{ so that } \\
    t_j \text{ is an update time of }(v_j,v_{j+1}) \text{ for all }j \in \{0,\ldots,k-1\}\end{array}\right\}
    \end{equation*}
    where we set the maximum to be $0$ if no such $t_0$ exists. 
    For an arbitrary vertex $u \in V$, we may define  \begin{equation}\label{eq:AoI-optimization}
    X_{u \to v}(t) = t - \max\left\{ t_0 : \begin{array}{c}\exists~v_0,v_1,\ldots,v_k \text{ with }v_0 = u, v_k = v \text{ and } \\
    0 \leq t_0 \leq t_1 \leq \ldots \leq t_{k-1} \leq t \text{ so that } \\
    t_j \text{ is an update time of }(v_j,v_{j+1}) \text{ for all }j \in \{0,\ldots,k-1\}\end{array}\right\}
    \end{equation}
    and so we have $X_v(t) = X_{\vs \to v}(t)$.  We will switch to considering the delays between the $t_j$'s rather than the times themselves.  With this in mind, if we change variables and set $s_{k - 1} = t - t_{k-1}$ and $s_j = t_{j+1} - t_j$ for $j \in \{0,1,\ldots,k-2\}$ then we can rewrite \eqref{eq:AoI-optimization} as 
    \begin{align}
    X_{u \to v}(t) &= \min\left\{ \sum_{j \geq 0} s_j : \begin{array}{c}\exists~v_0,v_1,\ldots,v_k \text{ with }v_0 = u, v_k = v \text{ and } \\
    s_j \geq 0 \text{ for } j \in \{0,1,\ldots,k-1\}, \sum_{j \geq 0} s_j \leq t \text{ so that } \\
    t - \sum_{i \geq j} s_i \text{ is an update time of } (v_i,v_{i+1}) \text{ for all }i \in \{0,\ldots,k-1\}\end{array}\right\} \nonumber 
    \end{align}
    Since the time reversal of a homogeneous Poisson process is again a Poisson process of the same intensity, \cref{lem:first-passage-dynamic} completes the proof. 
\end{proof}

\section{A general bound for expected age in gossip networks}\label{sec:expectation}
We specialize our study to the model of gossip networks introduced by Yates~\cite{yates_gossip_isit, yates_gossip_spawc}.  Throughout we fix a communication graph $G = (V, E)$ with source $\vs$. In this model, there is an arc between the source and every node.  For all $v \in V$ the weight of an edge $(v,u)$ is $\frac{\lambda}{\mathrm{outdeg}(v)}$. We will assume $\lambda = 1$ without loss of generality, as we can otherwise rescale $X_v$ by $\lambda$.

Let $B_m := \{u : \operatorname{dist}_G(u,v) \leq m\}$ be the ball of radius $m$ about $v$, where $\operatorname{dist}_G(\cdot, \cdot)$ is the unweighted graph distance.  Define $\mast := \min \{ m : |B_m(v)| m \geq n\}$ and $\Phi_m := \frac{|B_m(v)|}{|B_{m-1}(v)|}$. Recall $\delta$ is the minimum degree of $G$ and $\Delta$ is the maximum degree of $G$. 

Before we prove \cref{thm:expected_age} we require two bounds on sums of i.i.d.\ exponential random variables.  We use the following ``small ball probability bound'' which appears in \cite[Exercise 2.2.10(b)]{vhdp}:

\begin{lem}\label{lem:reverse_tail}
        For $Y_1,\ldots,Y_N$ i.i.d. exponential random variables with parameter $\mu$,
    \[
        \P\left( \sum_{j = 1}^N Y_j \leq t\right) \leq \left(\frac{et\mu}{N}\right)^{N}\,.
    \]
\end{lem}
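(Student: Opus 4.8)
The plan is to use a Chernoff-type bound with a \emph{negative} exponential parameter, which is the standard tool for lower-tail (small-ball) estimates on sums of nonnegative random variables. For any $s > 0$, Markov's inequality applied to $e^{-s\sum_j Y_j}$ gives
\[
\P\left(\sum_{j=1}^N Y_j \le t\right) = \P\left(e^{-s\sum_j Y_j} \ge e^{-st}\right) \le e^{st}\,\E\!\left[e^{-s\sum_j Y_j}\right] = e^{st}\left(\E e^{-sY_1}\right)^N .
\]
First I would compute the Laplace transform of a single exponential: $\E e^{-sY_1} = \mu/(\mu+s)$, and then apply the crude bound $\mu/(\mu+s) \le \mu/s$ (valid since $\mu+s > s$) to obtain $\P(\sum_j Y_j \le t) \le e^{st}(\mu/s)^N$ for every $s>0$.

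The second step is to optimize over $s$. Setting $f(s) = st - N\log s + N\log\mu$, we have $f'(s) = t - N/s$ and $f''(s) = N/s^2 > 0$, so $f$ is convex and minimized at $s = N/t$. Substituting this choice yields $f(N/t) = N + N\log(t\mu/N)$, hence $e^{f(N/t)} = (et\mu/N)^N$, which is exactly the claimed bound.

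I do not expect a real obstacle here — this is a routine Laplace-transform computation, and no case analysis is needed since the right-hand side exceeds $1$ precisely when $et\mu \ge N$, where the inequality is trivial. If one prefers to avoid the optimization step, an equally short route is to note that $\sum_j Y_j$ has a $\mathrm{Gamma}(N,\mu)$ density $\mu^N x^{N-1}e^{-\mu x}/(N-1)!$ on $[0,\infty)$; bounding $e^{-\mu x}\le 1$ and integrating over $[0,t]$ gives $\P(\sum_j Y_j \le t)\le \mu^N t^N/N!$, and the elementary inequality $N! \ge (N/e)^N$ (which follows from $e^N = \sum_{k\ge 0} N^k/k! \ge N^N/N!$) completes the argument.
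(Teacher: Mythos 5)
Your proof is correct, and it is exactly the standard negative-parameter Chernoff argument behind the bound that the paper simply cites (Exercise 2.2.10(b) in \cite{vhdp}) rather than proving; both your main route and your Gamma-density alternative with $N!\ge (N/e)^N$ are valid and complete.
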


We also need the following upper tail bound which appears in \cite[Theorem 5.1(i)]{janson2018tail}:
\begin{lem}\label{lem:upper_tail}
        For $Y_1,\ldots,Y_N$ i.i.d. exponential random variables with parameter $1$ we have \begin{equation*}
            \P\left(\sum_{j = 1}^N Y_j \geq t N \right) \leq \frac{1}{t}\exp\left(- n(t - 1 - \log t)\right)
        \end{equation*}
        for all $t \geq 1$.
\end{lem}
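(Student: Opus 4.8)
The plan is to obtain the tail bound via the standard Cramér--Chernoff (exponential Markov) method, since the sum of i.i.d.\ exponentials is a classical large-deviations object with an explicit rate function. Write $S_N = \sum_{j=1}^N Y_j$ where each $Y_j$ is exponential with parameter $1$, so that $\E e^{\theta Y_j} = (1-\theta)^{-1}$ for $\theta \in (0,1)$. For $t \geq 1$ and any such $\theta$ we have by Markov's inequality $\P(S_N \geq tN) \leq e^{-\theta t N} \E e^{\theta S_N} = e^{-\theta t N} (1-\theta)^{-N}$. Taking logarithms, the exponent is $-N\bigl(\theta t + \log(1-\theta)\bigr)$, and optimizing over $\theta \in (0,1)$ gives the minimizer $\theta^\ast = 1 - 1/t$ (which lies in $(0,1)$ precisely because $t \geq 1$); substituting yields $\P(S_N \geq tN) \leq e^{-N(t - 1 - \log t)}$.

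The remaining task is to extract the additional prefactor $1/t$ that appears in the statement, rather than just the bare exponential bound $e^{-N(t-1-\log t)}$. The cleanest route I would take is to not fully optimize: instead choose $\theta = \theta^\ast = 1 - 1/t$ for the exponential part but keep track of an Markov-inequality refinement, or alternatively integrate the density directly. Concretely, $S_N$ has the Gamma$(N,1)$ density $x^{N-1}e^{-x}/(N-1)!$, so
\begin{equation*}
    \P(S_N \geq tN) = \frac{1}{(N-1)!}\int_{tN}^\infty x^{N-1} e^{-x}\,\diff x.
\end{equation*}
Substituting $x = tN + y$ with $y \geq 0$ and bounding $(tN+y)^{N-1} \leq (tN)^{N-1} e^{y/t}$ (using $1 + y/(tN) \leq e^{y/(tN)}$ raised to the $N-1$ power), the integral over $y$ becomes $\int_0^\infty e^{-y(1-1/t)}\diff y = t/(t-1)$ times $(tN)^{N-1} e^{-tN}$, and then Stirling's inequality $(N-1)! \geq N^{N}e^{-N}/(eN) \cdot$ (appropriately) converts $(tN)^{N-1}e^{-tN}/(N-1)!$ into something of the form $\tfrac{1}{t}e^{-N(t-1-\log t)}$ up to lower-order factors. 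I would in fact just cite \cite[Theorem 5.1(i)]{janson2018tail} directly, since the lemma is stated as a known result; the sketch above is what one would check if reproving it.

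The main obstacle --- really the only subtlety --- is tracking the polynomial prefactor carefully enough: the bare Chernoff bound gives the right exponential rate immediately, but getting exactly the factor $1/t$ (as opposed to some other $t$-dependent constant, or no improvement at all near $t = 1$) requires either the direct Gamma-integral computation above or a more careful version of the Chernoff argument that does not discard the contribution of the moment generating function's behavior near $\theta^\ast$. Since the statement is quoted from the literature, in the paper itself I expect no proof is needed beyond the citation; the verification I would personally carry out is the density integral, which is elementary but does demand attention to the Stirling estimate.
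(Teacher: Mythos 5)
The paper offers no proof of this lemma at all: it is quoted verbatim from Janson \cite[Theorem 5.1(i)]{janson2018tail}, so your primary plan --- cite the result --- is exactly what the paper does, and to that extent your proposal matches. One caution about the optional verification you sketch: as written it does not recover the stated prefactor $1/t$. Bounding $(tN+y)^{N-1} \leq (tN)^{N-1}e^{y/t}$ gives $\int_0^\infty e^{-y(1-1/t)}\,\diff y = t/(t-1)$, and combining with $(N-1)! \geq N^{N-1}e^{-N}$ yields $\P(S_N \geq tN) \leq \frac{1}{t-1}e^{-N(t-1-\log t)}$, which is strictly weaker than $\frac{1}{t}e^{-N(t-1-\log t)}$ and degenerates as $t \to 1$ (where the target bound is still meaningful). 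So ``the direct Gamma-integral computation above'' does not by itself prove the lemma; one needs a sharper estimate on the upper incomplete Gamma function (or Janson's own argument) to extract exactly $1/t$. Since both you and the paper ultimately rely on the citation, this is a flaw only in the supplementary sketch, not in the logical route actually taken; but if you did want a self-contained proof, the weaker $1/(t-1)$ prefactor would also force a small modification downstream, because the paper's integral $\int_1^\infty \frac{1}{\theta}e^{-\mast(\theta-1-\log\theta)}\,\diff\theta$ uses the $1/\theta$ factor to stay bounded near $\theta = 1$.
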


\begin{proof}[Proof of \cref{thm:expected_age}] Throughout the proof write $\mast = \mast(v), \Phi_\mast = \Phi_{\mast(v)}(v)$ and $B_r = B_r(v)$ for all $r \geq 0$.
We begin with the lower bound.  By \cref{thm:age_distribution}, $\P(X_v(t) \geq a) = \P(T(v,\vs) \geq a)$ for $a \leq t$ and $0$ otherwise, so we focus our attention on the graph used for first passage percolation $G'$. Let $K := \frac{1}{10}\min\{\frac{\delta}{\Delta},\frac{1}{\Phi_\mast},\frac{t}{\mast} \}.$    Define the event $\cE_1 := \{\forall u \in B_\mast, \tau_{(u, \vs)} \geq K\mast\}$.  Then,
\begin{align*} 
\P(\cE_1) &= \exp \left( \frac{-K \mast |B_\mast|}{n} \right) = \exp \left( \frac{-K \mast |B_{\mast-1}| \Phi_{\mast}}{n} \right) \geq \exp \left( -\frac{2K(\mast - 1)|B_{\mast - 1}| \Phi_\mast}{n} \right)\\
&\geq \exp \left(-2K \Phi_\mast \right) \geq \exp(-1/5)\,.
\end{align*}
Now consider the event $$\cE_2 := \{\forall~u \in B_\mast \setminus B_{\mast - 1}, \forall~\gamma \text{ from }v \text{ to }u, T(\gamma) \geq K\mast \}\,.$$  

In words, $\cE_2$ is the event no short path exists from $v$ to the boundary of $B_{\mast}$.  For any node $x \in B_{\mast} \setminus B_{\mast - 1}$, the number of paths of length $L$ from $x$ to $v$ is at most $\Delta^L$.  Let $\{Y_i\}_1^L$ be i.i.d.\ exponential random variables with rate $1/\delta$. By \cref{lem:reverse_tail} we may bound 
\begin{align*}
\P(T(\gamma) \leq K\mast) &\leq \P\left(\sum_{i=1}^L Y_i \leq K\mast\right) \leq \left(\frac{ e K\mast}{L\delta} \right)^L 
\end{align*}
Then summing over all paths,

\begin{align*}
    \P(\cE_2^c) &\leq \sum_{L \geq \mast} \left(\Delta \frac{e K\mast}{\delta L} \right)^L \leq \sum_{k \geq 1} \left( \frac{e }{10} \right)^k = \frac{e/10}{1-e/10}\,.
\end{align*}
Further, $\cE_1$ and $\cE_2$ are independent and on the event $\cE_1 \cap \cE_2$ we have $ X_v(t) \geq K\mast$.  We then obtain a lower bound $$\E X_v(t) \geq \P(\cE_1)\P(\cE_2) K \mast \geq \frac{\mast}{20} \cdot \min\left\{\frac{\delta}{\Delta}, \frac{1}{\Phi_\mast}, \frac{t}{\mast} \right\}\,. $$

To prove the upper bound, analogous to the lower bound, we focus on $G'$.  First for a parameter $a >0$ define $\cE_1 := \left\{\exists~u \in B_{\mast} : \tau_{u,\vs} \leq a \right\}$\,.  We may then bound
\begin{align}\label{eq:cE1c}
    \P(\cE_1^c) = \P\left( \forall~u \in B_{\mast} : \tau_{u,\vs} \geq a   \right) = \P\left( \mathrm{Exp}(1/n) \geq a\right)^{|B_\mast|} = \exp\left(- a |B_{\mast}| / n\right) \leq \exp\left(-a/\mast\right)
\end{align}
where the first equality is because we are using Yates weights, and the last inequality is since $\mast |B_\mast| \geq n$.

We now note that for any $u \in B_{\mast}$ there is some path $\gamma$ from $u$ to $v$ using at most $m_\ast$ edges.  If we let $Y_j$ be i.i.d.\ exponential variables of mean $1$ and write $a = \Delta \mast \theta$ for $\theta \geq 1$ then we see by \cref{lem:upper_tail},
\begin{equation}\label{eq:upper-tail-path}
    \P(T(\gamma) \geq a) \leq \P\left(\sum_{j = 1}^\mast Y_j \geq \theta \mast \right) \leq \frac{1}{\theta}\exp\left(-\mast (\theta - 1 - \log \theta) \right)\,.
\end{equation}

We then see that under the event $\cE_1$ for $a \geq \Delta m_\ast$ we have that there is some vertex $u \in B_\mast$ with $\tau_{u,\vs} \leq a$; letting $\gamma$ denote any path from $u$ to $v$, we note that $T(\gamma)$ is independent of $\tau_{u,\vs}$, and so the upper bound in \eqref{eq:upper-tail-path} holds.  Combining \eqref{eq:upper-tail-path} with \eqref{eq:cE1c} provides \begin{align*}
    \E X_v(t) &\leq \Delta \mast + \int_{\Delta \mast}^\infty \P(X_v(t) \geq a) \,da \\
    &\leq \Delta \mast + \int_{\Delta \mast}^\infty e^{- a / \mast } \,da + \int_{\Delta \mast }^\infty \P(T(\gamma) \geq a) \,da \\
    &\leq \Delta \mast + \mast + \Delta \mast \int_1^\infty \frac{1}{\theta} e^{-(\theta - 1 - \log \theta)} d\theta \\&\leq 3 \Delta \mast  
\end{align*}
 thus completing the proof.
\end{proof}

\section{Applications}
\label{sec:examples}

\begin{example}[Transitive graphs of polynomial growth] \label{ex:poly-growth}
    Let $\mathcal{G}$ be a vertex-transitive infinite graph and let $v_0 \in \mathcal{G}$.  Assume that $\mathcal{G}$ has polynomial growth, i.e.\ there are constants $C,\kappa > 0$ so that $|B_r(v_0)| \leq C r^\kappa$ for all $r > 0$.   Work of Gromov and Trofimov imply that in fact there is a constant $C > 0$ and \emph{integer} $d$ so that 
    \begin{equation}\label{eq:poly-growth}
        \frac{1}{C} r^d \leq |B_r(v_0)| \leq C r^d 
    \end{equation}
    i.e.\ balls must grow like an integer power (see the discussion in \cite{imrich1989survey} in the paragraph below Theorem 2.2). 
    
    Let $G = B_r(v_0)$, i.e.\ the ball of radius $r$ in the graph $\mathcal{G}$ centered at $v_0$.  Let $n = |G|$ and note that \eqref{eq:poly-growth} shows that $m_\ast(v) = \Theta(n^{1/(d+1)})$ thus showing $\E X_{v_0}(t) = \Theta(n^{1/(d+1)})$ by \cref{thm:expected_age}.  In particular, for well-behaved connected subgraphs of vertex-transitive graphs of polynomial growth, the AoI is asymptotically $\Theta(n^{1/k})$ where $k\geq 2$ is some integer.
\end{example}

\begin{example}[Transitive graphs of superpolynomial growth]\label{ex:big-growth}
    In the setting of \cref{ex:poly-growth}, if one assumes that $\mathcal{G}$ has superpolynomial growth---i.e.\ that for each $A$ one has that $|B_r(v_0)| \geq r^A$ for all sufficiently large $r$---then one obtains that $\E X_{v_0}(t) \leq n^{o(1)}$.  We note if $\mathcal{G}$ has exponential growth\footnote{It is possible for $\mathcal{G}$ to have intermediate growth, i.e.\ superpolynomial growth but subexponential growth, e.g.\ in the case of a Cayley graph of the Grigorchuk group.} then $\E X_{v_0}(t) = \Theta( \log n)$. 
\end{example}

\begin{example}[Graphs with arbitrary polynomial power AoI]  \label{ex:intermediate} Fix $\alpha \in (0,1/2)$.   
Consider the random $3$-regular graph $H$ on $m$ nodes and replace each edge with a path of length $\lfloor m^\gamma \rfloor$ to obtain a graph $G$.  The graph $G$ has $n := (3/2 + o(1))m^{1 + \gamma}$ nodes, maximum degree $3$ and minimum degree $2$.  Since $H$ has diameter $O(\log m)$ with high probability \cite{bollobas_diameter_1982}, the graph $G$ has diameter $O(m^\gamma \log m)$.  Further, for each natural number $k$, each ball of radius $k \lfloor m^\gamma\rfloor$ in $G$ has at most $3^k m^\gamma$ nodes.  This implies that for each $v$ we have that $m_\ast(v) = \Theta( n^\gamma \log n)$ since we are approximately solving the equation $$(k m^\gamma) 3^k m^\gamma = n =(1 + o(1))m^{1 +\gamma}$$
and obtain a solution of $k = \Theta(\log m)$.  We then have that the AoI satisfies $$\E X_t = {\Theta}_\gamma(m^{\gamma} \log m)\,.$$
Take $\gamma \in (0,1)$ to solve the equation \begin{equation*}
    \frac{\log \log m}{\log m} + \gamma = (1 + \gamma)\alpha
\end{equation*}
which is always possible for $m$ large enough as a function of $\alpha \in (0,1/2)$. We then obtain that 
$$\E X_t = {\Theta}_\alpha(m^{\gamma} \log m)= {\Theta}_\alpha(n^\alpha)\,. $$  We note that in light of \cref{ex:poly-growth}, all such examples must not come from transitive graphs.
\end{example}

\begin{example}[The integer lattice] \label{item:lattice} If $\Z_m^d$ is the $d$-dimensional toroidal lattice with $n = m^d$, then $\mast = \Theta_d( n^{1/(d+1)} )$, so  $\E X_t = \Theta_d(n^\frac{1}{d+1})$.  Note also that by \cref{ex:poly-growth} if one takes $G$ to be the ball of radius $r$ in $\Z^d$ then the same result holds as well. 
\end{example}

\begin{example}[Random geometric graphs] \label{ex:RGG}
Let $G = G(\gamma, n;d)$ be the $d$-dimensional random geometric graph, where nodes are vectors $x \in \R^d, \|x\|_2 < 1$ distributed uniformly at random in the unit ball and $(u,v) \in E$ if and only if $\|u -v\|_2 < \gamma$.  If we parameterize $\gamma = \alpha\left(\frac{\log n}{V_d n}\right)^{1/d}$ for $\alpha > 1$ and $V_d$ denoting the volume of the unit ball, then $G$ is connected\footnote{
To get a feeling for connectivity occurring above this quantity, note that that for $\gamma = \alpha (\frac{\log n}{V_d n})^{1/d}$, the average degree is asymptotically $\alpha^d \log n$.  If one parameterizes the Erd\H{o}s-R\'enyi random graph in terms of average degree, $\log n$ is also the threshold for connectivity.} with high probability \cite{godehardt1996connectivity,penrose2003random}.  We also have that for each $r > 0$ if one considers a random point in $v \in G$ then one has $$\E B_r(v) \leq n V_d (r\gamma)^d = (\alpha \log n r)^d\,.$$

Further, a shape theorem for the distances in continuum percolation \cite{yao2011large} implies that with high probability we also have $$B_r(v) = \widetilde{\Omega}_{\alpha,d}(r^d)\,.$$

In particular, this implies that $\mast = \widetilde{\Theta}(r^{1/(d+1)})$ with high probability.  Letting $\Delta$ denote the maximum degree, a union bound along with a Chernoff bound\footnote{Here we use the bound $\P(\mathrm{Bin}(n,p) \geq (1 + \lambda) np ) \leq \exp(-c \lambda np)$ for some universal $c > 0$.} shows \begin{align*}
	\P(\Delta \geq C \alpha^d \log^2 n) \leq n \P\left(\mathrm{Bin}(n, \alpha^{d}\frac{\log n}{n}) \geq C \alpha^d \log^2 n  \right) \leq n \cdot \exp\left(- \frac{c \cdot C n \alpha^d \log^2 n}{\alpha^d n \log n } \right) \leq \frac{1}{n}
\end{align*}
for a sufficiently large universal constant $C > 0$.  Thus by \cref{thm:expected_age} we have $\E X_v(t) = \widetilde{\Theta}_{\alpha,d}(n^{1/(d+1)})$, which matches the behavior of the lattice in \cref{item:lattice}. This  demonstrates that random geometric graphs in a metric space behave in a similar fashion to lattices.  In particular, in this regime their AoI is governed by global geometric properties of the space rather than local properties.
\end{example}

\begin{example}[Random regular graphs] \label{ex:RRG}
For constant $d$, let $G$ be the random $d$-regular graph.  Bollob\'as and Vega~\cite{bollobas_diameter_1982} showed the diameter of $G$ is $O(\log(n \log n))$, so $X_t = \Theta(\log n)$. This is another proof of the same result by the first author~\cite{maranzatto2024age}.
\end{example} 

\begin{example}[Cycle plus random matching] \label{ex:CRM}
If $G = C + \mathcal{M}$, is the union of a cycle and a random matching, then a classic result by Bollob\'as and Chung~\cite{cycle_matching} gives the diameter of $G$ is with high probability $O(\log n)$, therefore $X_t = \Theta(\log n)$.  This shows that adding $O(n)$ edges can reduce the version age of a graph from $O(\sqrt{n})$ to asymptotically optimal, and one can interpret the random matching as highways along which information can travel between notes that are far in the underlying graph $C$.
\end{example}

\section{Sharpness of Theorem \ref{thm:expected_age}: regular trees and the Hamming cube}\label{sharpness}

In a setting where the maximum degree of the graph is bounded, the upper and lower bounds in \cref{thm:expected_age} are equal up to a constant depending on this maximum degree.  Here we provide more detailed analysis to show that both the upper and lower bounds are in fact sharp up to universal constants for some networks.  For the lower bound we will look at the Hamming cube and for the upper bound we will look at regular trees.

\subsection{Sharpness of the lower bound: the Hamming cube} \label{subsection:lattice}

Let $Q_d := \{0,1\}^d$ under nearest neighbor distances.  We will first upper bound $\E X(t)$ and then analyze $\Phi_{\mast}$ to see that the lower bound in \cref{thm:expected_age} meets this bound.

\begin{thm}\label{thm:hypercube}
There is a universal constant $C > 0$ so that in the Hamming cube $Q_d$ we have $$\E X(t) \leq C d\,.$$
\end{thm}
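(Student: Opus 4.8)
The plan is to combine \cref{thm:age_distribution} with first passage percolation on $Q_d$ under Yates weights, and then to reduce everything to controlling the \emph{average} first‑passage distance on the cube itself. By \cref{thm:age_distribution}, $X(t) = \min\{T_{G'}(v,\vs),t\}\le T_{G'}(v,\vs)$, where in the auxiliary graph $G'$ each of the hypercube edges carries an independent $\mathrm{Exp}(1/d)$ weight (every vertex of $Q_d$ has out‑degree $d$) and each edge $(u,\vs)$ carries an independent $\mathrm{Exp}(1/n)$ weight, $n=2^d$. Since $\vs$ has no out‑edges in $G'$, the distance $T_{G'}(v,u)$ for $u\neq\vs$ uses only hypercube edges; write $T:=T_{G'}$ restricted to $Q_d$. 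Let $u^\star\in Q_d$ be the vertex $u$ minimising $\tau_{(u,\vs)}$. Then $T_{G'}(v,\vs)\le T(v,u^\star)+\tau_{(u^\star,\vs)}$; since $u^\star$ is uniform on $Q_d$ and independent of the hypercube weights, while $\tau_{(u^\star,\vs)}=\min_u\tau_{(u,\vs)}\sim\mathrm{Exp}(1)$ is independent of them and of $u^\star$, taking expectations gives
\[
\E X(t)\ \le\ \E\big[T(v,u^\star)\big]+1\ =\ \frac1{2^d}\sum_{u\in Q_d}\E\,T(v,u)\ +\ 1 .
\]
(Equivalently: conditioning on the hypercube weights, $\P(T_{G'}(v,\vs)>t)=\exp\!\big(-\tfrac1n\int_0^t|C_\tau|\,d\tau\big)$ with $C_\tau=\{u\in Q_d:T(v,u)\le\tau\}$ the Richardson cluster, and $\tfrac1n\int_0^t|C_\tau|\,d\tau\ge t-\tfrac1n\sum_u T(v,u)$, so integrating in $t$ yields the same bound.) Hence it suffices to show that the expected average first‑passage distance from $v$ on $Q_d$ is $O(d)$.

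Next I would localise this average. There are $\binom dr$ vertices at Hamming distance $r$ from $v$, and by a Chernoff bound $\sum_{r:\,|r-d/2|\ge\sqrt{d\log d}}\binom dr = 2^{d-\Omega(\log d)}$. Moreover $\E\,T(v,u)$ is at most the expected cover time of Richardson's model on $Q_d$, which the edge‑isoperimetric inequality on the cube (subcubes minimise the edge boundary, so every $S\subseteq Q_d$ has $|\partial S|\ge|S|(d-\log_2|S|)$) bounds by $O(d\log d)$: a cluster of size $s\le 2^{d-1}$ grows at rate $\ge s(d-\log_2 s)/d$, and $\sum_s \tfrac{d}{s(d-\log_2 s)}=O(d\log d)$, with the complement treated symmetrically for $s>2^{d-1}$. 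Therefore the atypical radii contribute only $O(\log d)$ to the average, and it remains to prove $\E\,T(v,u)=O(d)$ for $u$ at Hamming distance $r\in d/2\pm\sqrt{d\log d}$ — equivalently, that Richardson's model on $Q_d$ with $\mathrm{Exp}(1/d)$ weights infects every vertex at distance $\approx d/2$ within $O(d)$ time in expectation.

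The main obstacle is precisely this last step, and it is a genuine statement about the shape of the growing cluster rather than a consequence of edge‑isoperimetry alone: the isoperimetric bound is tight only for subcubes, whose edge‑expansion is far smaller than that of a grown, ``round'' cluster, and it gives only the weaker $O(d\log d)$. I would close the gap in two regimes. For cluster sizes $s\le 2^{(1-\eps)d}$ the isoperimetric bound already yields edge‑expansion $\ge\eps d$, hence growth rate $\ge\eps s$, so the cluster reaches size $2^{(1-\eps)d}$ in time $O_\eps(d)$ (with fluctuations $O_\eps(1)$, since this time is a sum of exponentials with summable variances). For sizes $s\in[2^{(1-\eps)d},2^d]$ one uses that the cluster is ball‑shaped: a Hamming ball of volume $2^{d-k}$ has edge‑expansion of order $\sqrt{dk}$, so if the cluster's edge boundary is comparable to that of a Hamming ball of the same volume, the time it spends in this range is of order $\int_1^{\eps d}\tfrac{d}{\sqrt{dk}}\,dk=O(\sqrt\eps\,d)=O(d)$; consequently $C_\tau=Q_d$ by time $O(d)$ with overwhelming probability. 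The technical heart is thus to make ``the Richardson cluster on $Q_d$ is ball‑shaped'' precise — a shape/concentration statement in the spirit of Fill and Pemantle's work on Richardson's model on the cube — for which I would either invoke such a result or establish directly a two‑sided bound $B_{\rho^-(\tau)}(v)\subseteq C_\tau\subseteq B_{\rho^+(\tau)}(v)$ with deterministic $\rho^\pm$ satisfying $\rho^\pm(Cd)\ge d/2$ for a universal $C$, for instance via a union bound over a suitably chosen family of short, nearly edge‑disjoint paths to each target vertex.
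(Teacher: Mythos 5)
Your reduction is fine and matches the start of the paper's argument: by \cref{thm:age_distribution}, $X(t)\le T(v,u^\star)+\tau_{(u^\star,\vs)}$ with $u^\star$ the minimizer of the source edges, $\min_u\tau_{(u,\vs)}\sim\mathrm{Exp}(1)$, so everything hinges on showing that vertex-to-vertex passage times on $Q_d$ with mean-$d$ exponential weights are $O(d)$ in expectation. But that is exactly the content of the theorem, and your proposal does not prove it. The only step you carry out in full is the isoperimetric growth bound, which (as you note) yields a cover time of order $d\log d$, a factor $\log d$ too weak. The improvement to $O(d)$ is deferred to the claim that the Richardson cluster is ``ball-shaped,'' i.e.\ that once the cluster has size $2^{(1-\eps)d}$ its edge boundary is comparable to that of a Hamming ball of the same volume. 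That is precisely the hard shape/concentration statement one would need, it is not a consequence of anything you establish, and you offer no precise citation for it; saying you would ``invoke such a result or establish directly a two-sided bound'' is naming the gap, not closing it.

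Your fallback suggestion --- a union bound over a family of short, nearly edge-disjoint paths --- also cannot work as stated, because the inequality points the wrong way: a union bound over paths upper-bounds the probability that \emph{some} path is fast (useful for lower bounds on $T$, as in the proof of the lower bound of \cref{thm:expected_age}), whereas here you need to guarantee the \emph{existence} of a fast path. A first-moment count is not enough either: for a target at distance $d/2$, by \cref{lem:reverse_tail} each path of length $d/2$ is faster than $Cd$ with probability about $(2eC/d)^{d/2}$, while there are about $(d/2)!$ such paths, so the expected number of fast paths is roughly $C^{d/2}$ and one needs a second-moment or percolation-type argument to convert this into a probability bound, plus a further boosting step to control the expectation. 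This is exactly what the paper supplies: \cref{fact:percolation-hyper} gives an oriented path in $(1+\eps)e/d$-percolation with probability bounded below by a constant, the amplification inequality \eqref{eq:indep-dominate} upgrades this to $\P(T(x,y)\ge\lambda)\le 2e^{-c\lambda/2}$ for standard exponentials (\cref{cor:hypercube}), and hence $\E T(x,y)=O(1)$, i.e.\ $O(d)$ after rescaling. Incidentally, once one has such a uniform bound over all pairs, your localisation to Hamming distances $d/2\pm\sqrt{d\log d}$ is unnecessary; it does not make the remaining task any easier, since distance-$d/2$ pairs are essentially as hard as antipodal ones.
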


To begin with, we will deduce from previous work on percolation an upper bound on the expected first passage time (see, e.g., \cite{fill-pemantle}).

\begin{fact}\label{fact:percolation-hyper}
    Orient $Q_d$ so that each edge points towards the node of larger hamming weight.  For each $\eps > 0$ there is a constant $c(\eps) > 0$ so that the following holds.  If we keep each edge independently with probability $(1 + \eps)e/d$ and delete it with the remaining probability, then the probability that there is an oriented path from $(0,0,\ldots,0)$ to $(1,1,\ldots,1)$ is bounded below by $c$.  
\end{fact}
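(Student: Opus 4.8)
The plan is to read Fact~\ref{fact:percolation-hyper} as the statement that oriented bond percolation on the $d$-cube is \emph{supercritical} at edge density $p := (1+\eps)e/d$, and to establish it by a first- and second-moment computation (this is, in essence, the Fill--Pemantle threshold result for the cube~\cite{fill-pemantle}, which one could also simply quote). Since the orientation forces every oriented path from $(0,0,\dots,0)$ to $(1,1,\dots,1)$ to be monotone, such a path is specified by an order in which to turn the $d$ bits on; there are exactly $d!$ of them, each of length $d$. Let $N$ be the number of these paths all of whose edges are kept, so that the event in Fact~\ref{fact:percolation-hyper} is precisely $\{N \ge 1\}$. Then $\E N = d!\,p^d$, and Stirling's formula gives $\E N = (1+o(1))\sqrt{2\pi d}\,(1+\eps)^d \to \infty$; in particular $p=e/d$ is the first-moment threshold. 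It therefore suffices, by the Paley--Zygmund inequality $\P(N \ge 1) \ge (\E N)^2/\E[N^2]$, to prove a matching second-moment bound $\E[N^2] \le C\,(\E N)^2$.

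For the second moment, expanding over ordered pairs of monotone paths gives $\E[N^2] = \sum_{\gamma,\gamma'} p^{\,2d - |\gamma\cap\gamma'|}$, where $|\gamma\cap\gamma'|$ is the number of shared edges, so $\E[N^2]/(\E N)^2 = \E\big[\,p^{-|\gamma\cap\gamma'|}\,\big]$ with $\gamma,\gamma'$ independent uniformly random monotone paths. Encoding each path by a uniform permutation of the coordinates, one checks that $\gamma$ and $\gamma'$ agree on the edge between levels $k$ and $k+1$ precisely when their first $k$ bits coincide as a set and their $(k+1)$st bits coincide; hence the shared edges form a disjoint union of monotone ``arcs''. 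A block of agreement on $\ell$ consecutive edges starting from level $0$ (or ending at level $d$) has probability $\tfrac{(d-\ell)!}{d!} \le d^{-\ell}$, while a block straddling a middle level $k$ carries an extra factor $1/\binom{d}{k}$ that is super-exponentially small. Summing over arc patterns shows the dominant contribution to $\{|\gamma\cap\gamma'| = j\}$ comes from short arcs at the two ends, giving $\P(|\gamma\cap\gamma'| = j) = O(j\,d^{-j})$ for bounded $j$ and negligible contribution once $j$ grows. Therefore $\E[p^{-|\gamma\cap\gamma'|}] = 1 + O\!\big(\sum_{j\ge1} j\,((1+\eps)e)^{-j}\big) + o(1)$, which is bounded by an absolute constant since $(1+\eps)e > 1$; this yields the claimed $c(\eps) > 0$ (and in fact $c$ may be taken independent of $\eps$).

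The only genuinely substantive step is the overlap estimate in the second paragraph --- verifying that $|\gamma\cap\gamma'|$ has a light enough upper tail that $\E[p^{-|\gamma\cap\gamma'|}]$ remains bounded as the density approaches $e/d$. This is the heart of the cube percolation phase transition, which is why it is cleanest simply to invoke~\cite{fill-pemantle}. A minor point: the density $p = (1+\eps)e/d$ must be at most $1$, which holds once $d$ is large in terms of $\eps$, and the finitely many remaining values of $d$ can be absorbed into the constant $c(\eps)$.
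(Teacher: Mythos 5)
The paper never proves this statement: it is imported verbatim as a Fact from the percolation literature, with \cite{fill-pemantle} given as the reference, so any self-contained argument you supply is by definition a different route. Your route is the natural one --- first moment $\E N=d!\,p^d=(1+o(1))\sqrt{2\pi d}\,(1+\eps)^d\to\infty$, Paley--Zygmund, and the identity $\E[N^2]/(\E N)^2=\E\bigl[p^{-|\gamma\cap\gamma'|}\bigr]$ for two independent uniform monotone paths, with the overlap decomposed into monotone arcs, end arcs dominant and interior arcs suppressed by $1/\binom{d}{k}$ --- and this skeleton is sound. What it buys over the paper's treatment is a proof in place of a citation; what it costs is that the one substantive estimate (the overlap tail) must actually be carried out, and that is exactly where your write-up is not yet correct.

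Two of your intermediate claims fail as stated. First, the inequality $(d-\ell)!/d!\le d^{-\ell}$ is backwards: since $d(d-1)\cdots(d-\ell+1)\le d^{\ell}$ one has $(d-\ell)!/d!\ge d^{-\ell}$, and the honest upper bound carries an extra factor $\exp\bigl(O(\ell^2/d)\bigr)$. Second, and more seriously, the tail claim $\P(|\gamma\cap\gamma'|=j)=O(j\,d^{-j})$ is false for large $j$: already $\P(\gamma=\gamma')=1/d!\approx\sqrt{2\pi d}\,(e/d)^{d}$, which is larger than $d^{-d}$ by a factor $e^{d}$, and in general a shared initial arc of length $a$ and final arc of length $b$ with $a+b=j$ has probability $(d-j)!/d!\approx d^{-j}e^{j^2/(2d)}$. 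So ``negligible contribution once $j$ grows'' is not automatic; the summand $p^{-j}\,\P(|\gamma\cap\gamma'|=j)$ is of order $((1+\eps)e)^{-j}e^{j^2/(2d)}$ up to polynomial factors, and you must use $j\le d$ to get $e^{j^2/(2d)}\le e^{j/2}$, leaving a geometric series with ratio $\bigl((1+\eps)\sqrt{e}\bigr)^{-1}<1$; combined with the $1/\binom{d}{k}$ suppression of interior arcs (which, note, is only polynomially small when $k$ is near $0$ or $d$, though that still suffices) and a count of arc patterns, this closes the second-moment bound. As written, the large-overlap regime is dismissed with an incorrect estimate, so there is a genuine, though repairable, gap precisely at the step you flag as the heart of the matter; alternatively, quoting \cite{fill-pemantle} outright, as the paper does, is legitimate.
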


From here, an upper bound on oriented first-passage percolation with follows quickly.

\begin{cor}\label{cor:hypercube}
    In oriented first-passage percolation on $Q_d$ with i.i.d.\ standard exponential edge weights, there is a universal constant so that $$\max_{x,y} \E T(x,y) \leq C$$
    provided there is a directed path from $x$ to $y$.
\end{cor}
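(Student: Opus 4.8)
\textbf{Proof proposal for \cref{cor:hypercube}.}

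The plan is to reduce the first-passage time bound to the percolation statement in \cref{fact:percolation-hyper} via a standard rescaling-and-iteration argument, exploiting the memorylessness of the exponential weights. First I would observe that it suffices to bound $\E T((0,\ldots,0),(1,\ldots,1))$, since any directed pair $x,y$ with a directed path between them spans a sub-hypercube $Q_k$ for some $k \le d$ on which the same argument applies, and the resulting constant is uniform in $k$. Orient $Q_d$ as in \cref{fact:percolation-hyper}. For a threshold $a>0$, declare an edge $e$ \emph{open} if $\tau_e \le a$; then each edge is open independently with probability $1 - e^{-a}$. Choosing $a = a(\eps)$ so that $1 - e^{-a} \ge (1+\eps)e/d$ (possible once $d$ is large; small $d$ is handled trivially since $T$ is then a bounded-length sum), \cref{fact:percolation-hyper} gives that with probability at least $c(\eps)$ there is an oriented open path from $\zdz$ to $\zdo$. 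Every edge on such a path has weight at most $a$, and the path has length exactly $d$, so on this event $T(\zdz,\zdo) \le a d$.

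The issue is that a single threshold only gives a constant-probability bound, not an expectation bound, so I would iterate. The natural approach: if the first round fails to produce an open path, we have learned only that certain edges have $\tau_e > a$; by memorylessness, conditioned on this, those edges have residual weight $\tau_e - a$ which is again exponential with rate $1$ and independent of everything else. One subtlety is that the event ``no open path exists'' is a decreasing event in the edge-openness, so conditioning on it biases the un-revealed structure in a favorable-looking but awkward direction; the cleanest fix is to instead reveal \emph{all} the $\tau_e$'s down to level $a$ and note that the conditional law of the vector $(\min(\tau_e, a))_e$ given the residuals, or rather the other way around, factorizes — concretely, $\tau_e = a \cdot \mathbbm{1}[\tau_e > a] \cdot (1 + \tau_e')$ is not quite right, so instead use: on the event that some edge on every $\zdz$-to-$\zdo$ path has weight $>a$, the first passage time is at most $a$ plus the first passage time in an independent copy of the model. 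More carefully, I would set up a sequence of independent ``attempts'': run first-passage percolation, and if $T > ad$ peel off time $ad$ uniformly and restart with fresh exponentials via the strong Markov property applied along a BFS/Dijkstra exploration — this is exactly the kind of exploration used in \cref{lem:first-passage-dynamic}. Each attempt independently succeeds (in the sense of certifying $T \le ad$ within that block) with probability $\ge c(\eps)$, so the number of attempts is stochastically dominated by a geometric random variable with mean $1/c(\eps)$, giving $\E T(\zdz,\zdo) \le a d / c(\eps) \cdot$ (a correction), which is $\Theta_\eps(1)$ since $a = a(\eps)$ and the $d$ cancels against the $1/d$ in the edge probability — wait, it does not cancel directly, so let me restate: $ad$ is order $d$, not order $1$. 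The resolution is that the correct comparison is to percolation on a \emph{coarse-grained} lattice.

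Let me therefore revise to the approach I actually expect to work. Rather than thresholding at a fixed $a$, partition the $d$ coordinates into $d/\ell$ blocks of size $\ell$ (for a large constant $\ell = \ell(\eps)$), and view a monotone lattice path from $\zdz$ to $\zdo$ as a concatenation of $d/\ell$ ``hops,'' each hop flipping the coordinates of one block from all-zero to all-one within the corresponding $Q_\ell$. \cref{fact:percolation-hyper} applied to $Q_\ell$ shows that each such hop can be traversed, along edges of weight $\le a(\eps)$, with probability $\ge c(\eps)$ — but we get many independent tries per hop because there are $\binom{d/\ell - 1}{\text{position}}$... this is getting complicated, and honestly the cleanest writeup is simply: the expected number of edges of weight $> a$ that any fixed monotone path must cross, summed appropriately, together with the positive-probability existence of an all-light path, yields via a second-moment or direct geometric-series argument that $\E T = O_\eps(1)$. \textbf{The main obstacle} is exactly this bookkeeping: converting the constant success probability of \cref{fact:percolation-hyper} into a genuinely dimension-free expectation bound, which requires either (i) a correct coarse-graining so that the ``number of rescaled steps'' is $O_\eps(1)$ rather than $O(d)$, or (ii) a delicate conditioning argument showing that repeated independent attempts each have $\Omega_\eps(1)$ success probability while the per-attempt cost stays $O_\eps(1)$. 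Once that is set up, the exponential tail of $T$ (hence integrability) and the bound $\E T \le C(\eps)$ follow by summing a geometric series; I would take $\eps = 1$, say, and absorb everything into a single universal $C$.
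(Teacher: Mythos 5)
There is a genuine gap: you correctly identify the two ingredients (threshold the exponential weights to reduce to Bernoulli percolation, then invoke \cref{fact:percolation-hyper}), but the step you yourself flag as ``the main obstacle''---amplifying the constant success probability $c(\eps)$ into a bound on $\E T$---is exactly the step you never carry out, and neither of your two proposed fixes is completed. The sequential-restart idea (ii) runs into precisely the conditioning problem you note: the event that no open path exists at threshold $a$ depends on all the edge weights, and the residual weights of edges with $\tau_e \le a$ are not fresh exponentials, so ``peel off time $ad$ and restart'' does not follow from memorylessness without further work. The coarse-graining idea (i) was prompted by a miscalculation: with $1-e^{-a} \ge (1+\eps)e/d$ you may take $a = \Theta(1/d)$, so $ad = O(1)$, not $O(d)$; the detour into block decompositions was unnecessary and is in any case left unfinished. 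As written, the proposal establishes only that $T \le O(1)$ with probability at least $c(\eps)$, which does not bound the expectation.

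The paper closes this gap without any restart or conditioning argument, via the observation \eqref{eq:indep-dominate}: percolation at density $1-(1-p)^k$ stochastically dominates the union of $k$ independent $p$-percolations, so the probability of an oriented crossing at edge density $\lambda/d$ is at least $1-2\exp(-c\lambda)$. Thresholding the exponentials at level roughly $\lambda/d$ (an exponential is below $2\lambda/d$ with probability at least $\lambda/d$ for $\lambda \le d/100$) then shows that with probability at least $1-2\exp(-c\lambda/2)$ there is an oriented path all of whose $d$ edges cost at most $2\lambda/d$, hence $T \le 2\lambda$; this gives the exponential tail $\P(T(x,y)\ge\lambda) \le 2\exp(-c\lambda/2)$, and integrating yields a universal bound on $\E T$. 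Your thresholding with $k$ ``independent attempts'' is morally the same amplification---note that thresholding at level $ka$ produces edge density $1-e^{-ka} = 1-(1-(1-e^{-a}))^k$, i.e.\ exactly the union of $k$ independent copies in distribution---but you would need to make that identification explicit (raising the threshold, not restarting in time) to obtain the tail bound and finish the proof.
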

\begin{proof}
    Without loss of generality, suppose $x = (0,0,\ldots,0)$ and $y = (1,1,\ldots,1)$ are opposite ends of the hypercube.  Write $\P_p$ to denote the probability measure given by performing $p$ percolation on the oriented hypercube, and let $x \to y$ be the event that $x$ is connected to $y$.  We first claim that for any $p \in (0,1)$ and integer $k$ we have \begin{equation}\label{eq:indep-dominate}
    \P_{1-(1-p)^k}( x \to y) \geq 1 - (1 - \P_{p}(x \to y))^k\,.
    \end{equation}
    We may see this by noting that if we sample $p$-percolation $k$ times independently, then the probability a given edge is open in at least one instance is exactly $1 - (1 - p)^k$, and so the union of these open sets is exactly $1 - (1 - p)^k$ percolation; if $x \to y$ in any of the independent $p$ percolations then $x \to y$ in the union, thus showing \eqref{eq:indep-dominate}.  If we take $p = 3/d$ then we note that for, e.g., $k \leq d/100$ we have $1 - (1 - 3/d)^k \leq \frac{10k}{d}\,.$  Using \cref{fact:percolation-hyper} we deduce the bound $$\P_{\lambda/d}(x \to y) \geq 1 - 2\exp(-c \lambda)$$
    for some constant $c > 0$. Note that for $\lambda \leq d/100$, we have that the probability a standard exponential random variable is less than $2\lambda/d$ is at least $\lambda/d$. 
    If there is a directed path with each exponential on that path being at most $2\lambda/d$, then the passage time is at most $2\lambda$.  We then see $$\P(T(x,y) \geq \lambda) \leq 1 - \P_{\lambda /(2d)}(x \to y) \leq 2 \exp(-c \lambda /2)$$
    thus completing the proof.
\end{proof}

Much more is known about $T(x,y)$ in the setting of \cref{cor:hypercube}.  In particular, if $x$ and $y$ are opposite points of the hypercube, then Fill and Pemantle \cite{fill-pemantle} show that $T(x,y) = 1 + o(1)$ with high probability.  We include the argument in \cref{cor:hypercube}  because we require a bound on $\E T(x,y)$ and not just a bound on $T(x,y)$ with high probability.

\begin{proof}[Proof of \cref{thm:hypercube}]
    Note that all weights $\tau_e$ to non-source nodes are exponential random variables of weight $d$.  The expected time until a packet it sent from the source is $1$, and the expected time for this to reach a given vertex is $O(d)$ by \cref{cor:hypercube}.  Applying \cref{thm:age_distribution} completes the proof.
\end{proof}

We now analyze the quantity in the lower bound of \cref{thm:expected_age} in the context of $Q_d$.   First note that $\delta = \Delta = d$ since $Q_d$ is regular.  We will also see that $\Phi_\mast = 1 + o(1)$, but require an estimate on the size of balls in the hamming cube first.  We use the following standard estimate which appears, for instance in \cite[Theorem 5.11]{spencer-book}:
\begin{fact}\label{fact:hamming-ball}
    Set $r = \frac{d}{2} - \alpha\sqrt{d}$ for $0 < \alpha = o(d^{1/6})$ and $\alpha = \omega(1).$  Then
    $$|B_r| = (1 + o(1))\cdot \frac{2^d }{\alpha \sqrt{8\pi}}\exp(-2 \alpha^2)\,.$$
\end{fact}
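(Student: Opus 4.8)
The plan is to read Fact~\ref{fact:hamming-ball} as a quantitative de~Moivre--Laplace estimate. Writing $|B_r| = \sum_{k=0}^{r}\binom dk = 2^d\,\P(\mathrm{Bin}(d,1/2)\le r)$ and standardizing, the threshold $r = \tfrac d2 - \alpha\sqrt d$ sits $2\alpha$ standard deviations below the mean $d/2$ (the standard deviation being $\sqrt d/2$), so heuristically $\P(\mathrm{Bin}(d,1/2)\le r) \approx \P(Z \le -2\alpha) \sim \tfrac{1}{2\alpha\sqrt{2\pi}}e^{-2\alpha^2}$, and multiplying by $2^d$ and using $2\sqrt{2\pi} = \sqrt{8\pi}$ produces exactly $\tfrac{2^d}{\alpha\sqrt{8\pi}}e^{-2\alpha^2}$. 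Since $\alpha = \omega(1)$ this is a moderate deviation, so the central limit approximation must be applied quantitatively; the purpose of the hypothesis $\alpha = o(d^{1/6})$ is to keep us comfortably inside the regime where the Gaussian shape is accurate (the Stirling expansion below in fact only needs $\alpha = o(d^{1/4})$).

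Concretely, I would fix a truncation scale $W := \tfrac{\sqrt d}{\alpha}\log d$ and split $|B_r| = \sum_{r - W \le k \le r}\binom dk + \sum_{k < r - W}\binom dk$. For the second sum, Hoeffding's inequality gives $\binom dk \le 2^d\exp\!\big(-2(d/2 - k)^2/d\big)$ for $k\le d/2$, so it is at most $d\cdot 2^d\exp\!\big(-2(\alpha\sqrt d + W)^2/d\big) = o\big(\tfrac1\alpha 2^d e^{-2\alpha^2}\big)$ because $4\alpha W/\sqrt d = 4\log d \to\infty$; hence this part is negligible against the target. For the dominant sum I would invoke the local estimate from Stirling's formula: uniformly over indices $k$ with $|k - d/2| = o(d^{3/4})$,
\[
\binom dk = (1 + o(1))\,\frac{2^d}{\sqrt{\pi d/2}}\,\exp\!\left(-\frac{2(k - d/2)^2}{d}\right),
\]
with the error uniform. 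This follows by expanding $\log\binom{d}{d/2 + x}$ via Stirling: beyond the quadratic term $-2x^2/d$, the corrections are $O(x^4/d^3) + O(x^2/d^2) + O(1/d)$, and for every $k$ in our window $|k - d/2| \le \alpha\sqrt d + W$, so (using $\alpha = o(d^{1/6})$) the leading correction is $O\big((\alpha^2 + \log^2 d)^2/d\big) = o(1)$, and the displayed asymptotic holds uniformly on the window.

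It then remains to sum the Gaussian profile. The summand $\exp(-2(k-d/2)^2/d)$ has consecutive ratios $1 + o(1)$ on the window, since its logarithmic derivative is $O(\alpha/\sqrt d) + O(\log d/(\alpha\sqrt d)) = o(1)$; hence the sum equals $(1+o(1))\int_{-\infty}^{r} \exp(-2(u - d/2)^2/d)\,du$, the lower limit being extendable to $-\infty$ because the discarded tail is again $o\big(\tfrac{\sqrt d}{\alpha}e^{-2\alpha^2}\big)$. The substitution $w = (u - d/2)\sqrt{2/d}$ turns this integral into $\sqrt{d/2}\int_{\alpha\sqrt2}^{\infty} e^{-w^2}\,dw$, and the classical tail asymptotic $\int_a^\infty e^{-w^2}\,dw = (1+o(1))\tfrac{e^{-a^2}}{2a}$ (valid since $a = \alpha\sqrt2 \to\infty$) gives $(1+o(1))\tfrac{\sqrt d}{4\alpha}e^{-2\alpha^2}$. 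Multiplying by $\tfrac{2^d}{\sqrt{\pi d/2}}$ and using $\tfrac14\sqrt{2/\pi} = \tfrac{1}{\sqrt{8\pi}}$ yields $|B_r| = (1+o(1))\tfrac{2^d}{\alpha\sqrt{8\pi}}e^{-2\alpha^2}$, as claimed.

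The only real obstacle is the bookkeeping: making the local estimate uniform over precisely the band of indices carrying the probability mass while simultaneously crushing the complementary tail, where the Gaussian approximation is useless once $|k-d/2|$ grows to order $d$. Both halves are routine given Stirling and Hoeffding, provided one picks the truncation scale $W$ with $W = \omega(\sqrt d/\alpha)$ and $W = o(d^{3/4})$ — a range that is nonempty exactly because $\alpha = \omega(1)$ and $\alpha = o(d^{1/6})$. Since this is a textbook moderate-deviation fact, one may alternatively simply cite \cite{spencer-book}, as the paper does.
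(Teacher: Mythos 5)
Your proposal is correct, but it is worth noting that the paper does not prove \cref{fact:hamming-ball} at all: it is quoted as a standard estimate and attributed to \cite[Theorem 5.11]{spencer-book}, so the ``paper's proof'' is a black-box citation, whereas you supply a self-contained derivation. Your route --- interpret $|B_r|2^{-d}$ as $\P(\mathrm{Bin}(d,1/2)\le d/2-\alpha\sqrt d)$, truncate at scale $W=\tfrac{\sqrt d}{\alpha}\log d$ using the per-term Hoeffding bound $\binom dk\le 2^d e^{-2(d/2-k)^2/d}$, apply the uniform local estimate $\binom{d}{d/2+x}=(1+o(1))\tfrac{2^d}{\sqrt{\pi d/2}}e^{-2x^2/d}$ on the window (valid since $|x|\le \alpha\sqrt d+W=o(d^{3/4})$, where indeed only $\alpha=o(d^{1/4})$ is needed), and finish with sum-to-integral comparison and the Mills-ratio asymptotic $\int_a^\infty e^{-w^2}\,dw\sim e^{-a^2}/(2a)$ at $a=\alpha\sqrt2$ --- is sound, and the constants check out: the threshold sits $2\alpha$ standard deviations below the mean and $\tfrac14\sqrt{2/\pi}=1/\sqrt{8\pi}$. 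The window bookkeeping is consistent ($\alpha W/\sqrt d=\log d\to\infty$ kills both the discarded binomial and Gaussian tails, and the discarded boundary terms are negligible because the main sum has $\Theta(\sqrt d/\alpha)\to\infty$ effective terms). The only cosmetic point you gloss over is that $r=\tfrac d2-\alpha\sqrt d$ need not be an integer; rounding shifts $\alpha$ by $O(1/\sqrt d)$, which perturbs $\alpha^{-1}e^{-2\alpha^2}$ by a factor $1+o(1)$ since $\alpha=o(\sqrt d)$, so the stated asymptotic is unaffected. What your argument buys over the paper's citation is a proof whose hypotheses ($\alpha=\omega(1)$, $\alpha=o(d^{1/6})$, with room to spare up to $o(d^{1/4})$ thanks to the vanishing skewness of the symmetric binomial) are visibly matched to where they are used; what the citation buys is brevity, which is appropriate given that the fact is textbook material.
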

Asymptotics for $\Phi_\mast$ follow quickly:
\begin{cor}
    In the Hamming cube if we parameterize $\mast = \frac{d}{2} - \alpha \sqrt{d}$ then $\alpha = (1 + o(1))\sqrt{ \frac{\log d}{2}}\,.$  Further, we have $\Phi_{\mast} = 1 + o(1)\,.$
\end{cor}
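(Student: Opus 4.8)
The plan is to read off $\alpha$ from the defining relation $\mast = \min\{m : |B_m|\,m \geq 2^d\}$ by substituting the ball-size asymptotics of \cref{fact:hamming-ball}, and then to estimate $\Phi_{\mast}$ by applying the same Fact to $|B_{\mast}|$ and $|B_{\mast-1}|$ separately.

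First I would certify that the implicitly-defined $\alpha$ (via $\mast = d/2 - \alpha\sqrt d$) lies in the window $\omega(1) = \alpha = o(d^{1/6})$ where \cref{fact:hamming-ball} is valid, since the Fact cannot be invoked before this is known. For the lower bound on $\alpha$: for any fixed constant $\beta$, a standard binomial concentration (Chernoff/Stirling) estimate gives $|B_{d/2-\beta\sqrt d}| \geq c(\beta)2^d$, hence $|B_{d/2-\beta\sqrt d}|\cdot(d/2-\beta\sqrt d) \gg 2^d$ once $d$ is large, forcing $\mast < d/2 - \beta\sqrt d$, i.e.\ $\alpha > \beta$; as $\beta$ was arbitrary, $\alpha \to \infty$. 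For the upper bound, the same estimate in the other direction gives $|B_{d/2-\beta\sqrt d}| \leq 2^d e^{-c\beta^2}$, so $\beta^2 > (\log d)/c$ forces $|B_{d/2-\beta\sqrt d}|\cdot d < 2^d$ and hence $\mast \geq d/2 - \beta\sqrt d$; this yields $\alpha = O(\sqrt{\log d}) = o(d^{1/6})$.

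Next, with both $\mast$ and $\mast-1$ in the valid regime, write $\mast = d/2 - \alpha\sqrt d$ and $\mast - 1 = d/2 - \alpha'\sqrt d$ with $\alpha' = \alpha + 1/\sqrt d = (1+o(1))\alpha$. Plug \cref{fact:hamming-ball} into the defining two-sided bound $|B_{\mast}|\,\mast \geq 2^d > |B_{\mast-1}|\,(\mast-1)$, and use $\mast = (d/2)(1+o(1))$ together with $e^{-2\alpha'^2} = (1+o(1))e^{-2\alpha^2}$ (valid since $2(\alpha'^2-\alpha^2) = 4\alpha/\sqrt d + O(1/d) \to 0$). Both inequalities then squeeze to
\[
\frac{d}{2\alpha\sqrt{8\pi}}\,e^{-2\alpha^2} = 1 + o(1)\,.
\]
Taking logarithms gives $2\alpha^2 = \log d - \log\alpha - O(1) + o(1)$; inserting the crude bound $\alpha = \Theta(\sqrt{\log d})$ (so $\log\alpha = \tfrac12\log\log d + O(1)$) shows $\alpha^2 = \tfrac{\log d}{2}(1-o(1))$, i.e.\ $\alpha = (1+o(1))\sqrt{(\log d)/2}$. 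Finally, $\Phi_{\mast} = |B_{\mast}|/|B_{\mast-1}| = (1+o(1))\,\tfrac{\alpha'}{\alpha}\,e^{2\alpha'^2-2\alpha^2}$ by \cref{fact:hamming-ball}, where $\alpha'/\alpha = 1+o(1)$ and $2\alpha'^2 - 2\alpha^2 = 4\alpha/\sqrt d + O(1/d) = o(1)$ since $\alpha = O(\sqrt{\log d})$, giving $\Phi_{\mast} = 1 + o(1)$.

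The main obstacle is the bootstrapping step: \cref{fact:hamming-ball} only holds for $\alpha$ in a prescribed window, so one must independently locate $\mast$ in that window using elementary binomial tail bounds before the Fact can be applied; everything afterward is a routine asymptotic computation. A secondary point to be careful about is that $\mast$ is defined by an integer minimum, so one must work with the squeeze $|B_{\mast}|\,\mast \geq 2^d > |B_{\mast-1}|(\mast-1)$ rather than an exact equation — but since $|B_{\mast}|/|B_{\mast-1}| = 1+o(1)$, the squeeze pins $\alpha$ down to the claimed precision.
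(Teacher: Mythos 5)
Your proposal is correct and follows essentially the same route as the paper: substitute the ball-size asymptotics of \cref{fact:hamming-ball} into the defining inequality for $\mast$ to solve for $\alpha$, and compute $\Phi_{\mast}$ from the ratio $|B_{\mast}|/|B_{\mast-1}|$ with $\alpha' = \alpha + 1/\sqrt{d}$. The only difference is that you spell out details the paper leaves implicit (verifying $\omega(1) = \alpha = o(d^{1/6})$ so the Fact applies, and handling the integer squeeze $|B_{\mast}|\mast \geq 2^d > |B_{\mast-1}|(\mast-1)$), which is a sound and welcome elaboration rather than a different argument.
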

\begin{proof}
    The asymptotic identity for $\alpha$ follows from \cref{fact:hamming-ball}.  For $\Phi_\mast$, note that if we write $\mast - 1 = \frac{d}{2} - \alpha'$ then we have $\alpha' = \alpha + \frac{1}{\sqrt{d}}$ and so \cref{fact:hamming-ball} implies \begin{equation*}
    \Phi_\mast = \frac{|B_\mast|}{|B_{\mast - 1}|} = (1 + o(1)) \frac{\alpha'}{\alpha} \exp(-2\alpha^2 + 2( \alpha')^2) = (1 + o(1))\exp\left(O(\sqrt{\log d / d})\right) = 1 + o(1)\,. \qedhere  \end{equation*}
\end{proof}

\subsection{Sharpness of the upper bound: regular trees} \label{subsection:tree}
Consider the $\Delta$-regular tree on $n$ nodes ($n$ chosen appropriately).  Note that for the root node $v_0$ we have $\mast(v_0) = (1 + o(1)) \log_\Delta(n)$, so the following result matches the upper bound in \cref{thm:expected_age}.
\begin{thm}
    There is some constant $n_0 := n_0(\Delta)$ such that for the root $v_0$ of the $\Delta$-regular tree with $n \geq n_0$ nodes, the average expected age of $v_0$ is at least $\frac{\Delta}{7} \log_\Delta n$
\end{thm}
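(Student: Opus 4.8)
The plan is to use \cref{thm:age_distribution} to replace the average expected age by a first passage time and then bound that time below by a first--moment argument. Let $D$ be the radius of the tree, so $n=|B_D(v_0)|=\Theta_\Delta((\Delta-1)^D)$ and, for $n$ large, $\log_\Delta n\le D$. By \cref{thm:age_distribution} and monotone convergence the quantity to bound is $\E\,T_{G'}(v_0,\vs)$. In $G'$ the source is a sink, so a path from $v_0$ to $\vs$ runs inside the reversed tree to some $u$ and then uses the single edge $u\to\vs$; since the tree part is a tree this gives the exact identity
\[
T_{G'}(v_0,\vs)=\min_{u\in V}\bigl(d(v_0,u)+\tau_{(u,\vs)}\bigr),
\]
where $d(v_0,u)$ is the passage time of the unique $v_0$--$u$ path and, by the Yates weight $1/n$ on $(\vs,u)$, the $\tau_{(u,\vs)}$ are i.i.d.\ $\mathrm{Exp}(1/n)$ and independent of the tree weights. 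If $\dist(v_0,u)=k$ then $d(v_0,u)$ is at least a sum of $k':=\min(k,D-1)$ i.i.d.\ $\mathrm{Exp}(1/\Delta)$'s (the edges of the path feed into internal vertices of degree $\Delta$, with one exception into a leaf when $k=D$).

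Now fix $r:=\Delta D/6$ and put $W_r:=\{u:d(v_0,u)<r\}$. Using independence of $d(v_0,u)$ and $\tau_{(u,\vs)}$ together with $\P(\tau_{(u,\vs)}<r)\le r/n$,
\[
\P\bigl(T_{G'}(v_0,\vs)<r\bigr)\le\sum_u\P\bigl(d(v_0,u)<r\bigr)\,\P\bigl(\tau_{(u,\vs)}<r\bigr)\le\frac rn\,\E|W_r| .
\]
So it is enough to show $\E|W_r|=o(n/r)$ as $n\to\infty$; then $\E\,T_{G'}(v_0,\vs)\ge r(1-o(1))=\tfrac{\Delta D}{6}(1-o(1))$, which for $n\ge n_0(\Delta)$ exceeds $\tfrac{\Delta}{7}D\ge\tfrac{\Delta}{7}\log_\Delta n$.

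The heart of the argument is the bound on $\E|W_r|=\sum_u\P(d(v_0,u)<r)$, and the crucial feature is that the sum ranges only over depths $k\le D$ --- the finite depth of the tree must be exploited. Bounding $\#\{u:\dist(v_0,u)=k\}\le\Delta^k$ and applying \cref{lem:reverse_tail} with $\mu=1/\Delta$, $N=k'$, $t=r$ gives
\[
\E|W_r|\le 1+\sum_{k=1}^{D-1}\Bigl(\tfrac{e\Delta D}{6k}\Bigr)^{k}+\Delta^{D}\Bigl(\tfrac{er}{(D-1)\Delta}\Bigr)^{D-1}.
\]
The map $k\mapsto(e\Delta D/6k)^k$ is maximized over $[1,D]$ either at $k=\Delta D/6$ (value $e^{\Delta D/6}$, when $\Delta\le6$) or at $k=D$ (value $(e\Delta/6)^D$, when $\Delta\ge6$); in both cases, since $e^{\Delta/6}<\Delta-1$ and $e\Delta/6<\Delta-1$ for all $\Delta\ge3$, the maximum is at most $\rho^D(\Delta-1)^D$ with $\rho=\rho(\Delta)<1$. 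The last displayed term is $\Delta$ times the $k=D-1$ summand, hence also $O_\Delta(\rho^D(\Delta-1)^D)$. Therefore $\E|W_r|\le 1+O_\Delta(D)\,\rho^D(\Delta-1)^D$, and since $n\ge(\Delta-1)^D$ we get $\tfrac rn\E|W_r|=O_\Delta(D^2\rho^D)+O_\Delta(D/(\Delta-1)^D)\to0$.

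I expect this last estimate to be the only real obstacle: one must recognize that truncating the depth sum at $D$ is essential. Dropping the truncation makes $\E|W_r|$ of order $e^{(\Delta-2)r/\Delta}$, which exceeds $n/r$ for large $\Delta$, so one could then only take $r$ a vanishing-in-$\Delta$ fraction of $\Delta D$; keeping it pins the dominant contribution at depth $D$ and keeps it exponentially below $n$ for $r$ up to a fixed constant (below $1/e$) times $\Delta D$. The reduction via \cref{thm:age_distribution}, the identity in $G'$, and the final comparison of constants are routine.
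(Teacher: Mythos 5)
Your proposal is correct, and it rests on the same two ingredients as the paper's proof: the reduction to first passage percolation via \cref{thm:age_distribution}, the small-ball bound of \cref{lem:reverse_tail} applied along the unique tree path, and the bound $\P(\tau_{(u,\vs)}<r)\le r/n$ for the $\mathrm{Exp}(1/n)$ source edges. The organization differs, though: the paper splits the union bound into two events --- the event $\mathcal{B}$ that some vertex within distance $\mast/2$ of $v_0$ has a cheap source edge (unlikely because such vertices are few), and, on $\mathcal{B}^c$, a uniform small-ball bound $(e/3)^{\mast/2}$ for every far vertex, summed crudely against the total vertex count $n$ --- whereas you run a single first-moment estimate $\P(T<r)\le\tfrac rn\,\E|W_r|$ and control $\E|W_r|$ depth by depth, optimizing $(e\Delta D/6k)^k$ over $k\in[1,D]$. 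Your bookkeeping is sharper in that it tracks the per-depth counts $\Delta^k$, which is why you need the truncation at depth $D$ for large $\Delta$; but note that the truncation is essential only for your per-depth accounting, not intrinsically: the paper's near/far split with the count bounded by $n$ (finiteness used only through the total vertex count) reaches the same constant without it. One small slip in your write-up: the claim ``$e^{\Delta/6}<\Delta-1$ and $e\Delta/6<\Delta-1$ for all $\Delta\ge3$'' is false as stated for the first inequality when $\Delta$ is large (e.g.\ $\Delta=20$); this is harmless because $e^{\Delta/6}<\Delta-1$ is only invoked in the case $\Delta\le6$, where the interior critical point $k=\Delta D/6$ lies in $[1,D]$, and $e\Delta/6<\Delta-1$ covers the case $\Delta\ge6$, but you should phrase each inequality as holding only in its own range.
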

\begin{proof}
    We will use \cref{thm:age_distribution} and show that $$\P\left( X_{v_0}(t) \geq \frac{\Delta \log_\Delta n}{6}\right) \geq 1 - o(1)$$ which will then complete the proof.  Let $G$ denote the graph $G'$ from  \cref{thm:age_distribution}.  Recalling that $\vs$ is the source node, define the random set 
    $$U := \left\{v \in V : \tau_{(v,\vs)} \leq \frac{\Delta \log_\Delta n}{6} \right\}\,.$$
    Define the (rare) event $$\mathcal{B} := \left\{ \min_{u \in U} \mathrm{dist}(u,v_0) \leq \frac{\mast}{2} \right\}\,.$$  A union bound then shows \begin{align*}
        \P(\mathcal{B}) \leq \sum_{v \in B_{\mast / 2}(v_0)} \P(\tau_{(v,\vs)} \leq \mast / 2) \leq |B_{\mast / 2}(v_0)|  \cdot \frac{\Delta \log_\Delta n}{n} \leq \frac{\Delta \log_\Delta n}{\sqrt{n}}(1 + o(1))
    \end{align*}
    which tends to $0$ as $n \to \infty$.  For a given node $v \in V$ set $d = \mathrm{dist}(v,v_0)$.  If $d \geq \mast / 2$ then compute \begin{align*}
        \P\left(T_G(v_0,v) \leq \frac{\Delta \log_\Delta n}{6} \right) \leq \left(\frac{e \log_\Delta n}{6 d} \right)^d \leq \left(\frac{e}{3}\right)^{\log_\Delta n / 2}\,.
    \end{align*} 
    We may then bound 
    \begin{align*}
       \P\left(X_{v_0}(t) \leq \frac{\Delta \log_\Delta n}{6} \right) &\leq \P(\mathcal{B}) + \P\left(\exists~u \in U, T_G(v_0,u) \leq \frac{\Delta \log_\Delta n}{6} \cap  \mathcal{B}^c\right) \\
        &\leq o(1) + \sum_{ \substack{u \in G \\  \mathrm{dist}(u,v_0) \geq \mast / 2}} \P(u \in U)\P\left(T_G(v_0,v) \leq \frac{\Delta \log_\Delta n}{6} \right) \\
        &\leq o(1) n \cdot \frac{\Delta \log_\Delta n}{n} \cdot \left(\frac{e}{3} \right)^{\log_\Delta n / 2} \\
        &= o(1)
    \end{align*}
    thus completing the proof.    
\end{proof}

\section*{Acknowledgments}
MM is supported in part by NSF CAREER grant DMS-2336788 as well as grants DMS-2137623 and DMS-2246624.
JM is supported in part by NSF Grant ECCS-2217023.

\bibliographystyle{plain}
\bibliography{bibliography}

\end{document}